\newtheorem{theorem}{Theorem}
\newtheorem{example}[theorem]{Example}
\def\eatspace#1{#1}
\def\step#1#2{\par\kern1pt\hangindent#2em\hangafter=1\noindent\rlap{\small#1}\kern#2em\relax\eatspace}
\let\set\mathbb
\def\<#1>{\langle#1\rangle}
\def\im{\operatorname{im}}
\def\coker{\operatorname{coker}}
\def\kdim{\operatorname{kdim}}
\def\Res{\operatorname{Res}}
\def\nicefrac#1#2{#1/#2}
\def\ord{\operatorname{ord}}
\begin{document}
\fancyhead{}
\title{Bounds for D-Algebraic Closure Properties}
\thanks{%
  M.\ Kauers was supported by the Austrian FWF grants 10.55776/PAT8258123, 10.55776/I6130, and 10.55776/PAT9952223.
  R.\ Pages was supported by the Austrian FWF grants 10.55776/PAT8258123. 
}

\author{Manuel Kauers}
\affiliation{%
  \institution{Institute for Algebra}
  \institution{Johannes Kepler University}
  \state{}
  \city{4040 Linz}
  \country{Austria}
}

\email{manuel.kauers@jku.at}

\author{Rapha\"el Pag\`es}
\affiliation{%
  \institution{Institute for Algebra}
  \institution{Johannes Kepler University}
  \state{}
  \city{4040 Linz}
  \country{Austria}
}

\email{raphael.pages@jku.at}

\begin{abstract}
  We provide bounds on the size of polynomial differential
  equations obtained by executing closure properties for
  D-algebraic functions. While it is easy to obtain bounds
  on the order of these equations, it requires some more
  work to derive bounds on their degree. Here we give bounds
  that apply under some technical condition about the
  defining differential equations. 
\end{abstract}

\begin{CCSXML}
	<ccs2012>
	<concept>
	<concept_id>10010147.10010148.10010149.10010150</concept_id>
	<concept_desc>Computing methodologies~Algebraic algorithms</concept_desc>
	<concept_significance>500</concept_significance>
	</concept>
	</ccs2012>
\end{CCSXML}

\ccsdesc[500]{Computing methodologies~Algebraic algorithms}

\keywords{Computer algebra, differential equations, elimination theory}
\maketitle

\section{Introduction}

D-finite functions have been a prominent topic in computer algebra for
many years. They are defined as solutions of linear differential equations
with polynomial coefficients. Such functions appear frequently in
many applications, and efficient algorithms are available for answering
all sorts of questions about them~\cite{kauers23c}.

But not every function of interest belongs to the class of D-finite functions.
The tangent, the exponential generating function for Bernoulli numbers,
the ordinary generating function for partition numbers, the Weierstra\ss-$\wp$
function, Painleve transcendents, and Jacobi $\theta$-functions are prominent
examples of functions that are not D-finite.

However, these functions still belong to the class of D-algebraic functions.
For a function~$f$ to be D-algebraic means that there is a polynomial $P$
such that $P(f,f',\dots,f^{(r)})=0$, i.e., the defining differential equation
for~$f$ may be nonlinear.

D-algebraic functions have recently attracted increased interest in the context
of combinatorics. For example, the exponential generating function for labeled
trees was shown to be D-algebraic~\cite{bostan20a}. Also restricted lattice
walks~\cite{bernardi21}, Eulerian orientations~\cite{melou20}, colored
planar maps~\cite{bernardi17a}, and permutation patterns~\cite{chen24} lead to
D-algebraic functions that are not D-finite. 

At the same time, D-algebraic functions are also interesting from the perspective
of computer algebra. 
A specific D-algebraic function is uniquely determined by a differential
equation which it satisfies and some finitely many initial terms of its series
expansion.
Denef and Lipshitz~\cite{denef84,denef89} give an algorithm for
checking whether two D-algebraic functions given in this way are equal. More
recent work in this direction is due to van der Hoeven~\cite{hoeven19}.

Based on the classical theory of differential algebra~\cite{ritt50,kolchin73}, a
constructive elimination theory has been developed, see,
e.g.,~\cite{seidenberg56,ollivier91,boulier96,carraFerro97,carraFerro07,rueda16}
and the references therein. One consequence of this theory is that
the class of D-algebraic functions is closed under addition, multiplication,
division, and composition, and some other operations. 
Manssour et al.~\cite{AEMaSaTe25} recently proposed new algorithms for executing
such closure properties. Given defining differential equations for two D-algebraic
functions $f$ and~$g$, these algorithms compute defining differential equations
for $f+g$, $f\cdot g$, $f/g$, $f\circ g$, etc.

It is not difficult to see why the class of D-algebraic functions is closed
under these operations if we assume that the functions and their derivatives can
be identified with elements of a field. If $f$ satisfies a polynomial
differential equation $P(f,f',\dots,f^{(r)})=0$, then $f^{(r)}$ is algebraic
over the field generated by $f,f',\dots,f^{(r-1)}$, so this field has a
transcendence degree of (at most)~$r$. Note that this field is closed under
differentiation.
If $g$ is another D-algebraic function satisfying a differential equation of order~$s$,
so that the field generated by $g$ and its derivatives has transcendence degree
(at most)~$s$, then there is a field of transcendence degree (at most) $r+s$
containing, say, the sum~$h=f+g$ and all its derivatives. This implies that
$h,h',\dots,h^{(r+s)}$ are algebraically dependent, and therefore that $h$
satisfies an equation of order at most $r+s$.

Besides confirming the closure under addition, this argument suggests an algorithm
for finding differential equations of sums, products, etc. of given D-algebraic
functions, and it provides bounds for the orders of these equations. All this is
not too different from closure properties for D-finite functions. The main difference
is that nonlinear elimination theory has to be employed in place of linear algebra.

As far as D-finite functions are concerned, not only bounds on the orders of the
resulting equations are known but we also have bounds on the degrees of the
polynomial coefficients~\cite{kauers14f,kauers23c}. The combination of order and
degree gives a more realistic idea of how big an equation is. The purpose of the
present paper is to provide analogous results for D-algebraic functions.

More precisely, rather than obtaining bounds on the size of the coefficients, we derive
bounds on the total degree of the polynomials~$P$. Our bounds apply under the assumption
that certain ideals are sufficiently generic. The bounds are quite large. Although
better bounds are available for more specific situations (see, e.g., \cite{mukhina25}),
we believe that for the generic case this is not due to pessimistic overestimation but
an indication that closure properties for D-algebraic functions can indeed lead to rather
large equations.

This may be an explanation why for some of the D-algebraic functions that have recently
come up in combinatorics, we do not explicitly know their defining equations even though
they could in principle be obtained from rather simple constituents by applying closure
properties. Perhaps they are simply too big.

For D-finite functions, it has been pointed out that a slight increase in order can allow
for a drastically smaller degree. This observation has led to the concept of order-degree
curves~\cite{bostan07,chen12c,jaroschek13a,kauers14f,kauers23c}. As we shall see, there
is a similar phenomenon in the nonlinear case. 

\section{Dimensions}\label{section_hilbert}
Throughout this paper, if $R$ is a commutative ring and $P_1,\dots,P_n$ are
elements of $R$ we will denote by $\langle P_1,\dots,P_n\rangle$ the ideal
$P_1 R+\dots+P_nR$ generated by $P_1,\dots,P_n$ in~$R$.

We recall some facts about Hilbert series, Hilbert polynomials and
dimensions of algebraic varieties that will be used later.
Throughout this section let $K$ be a field and $R=K[s,x_1,\dots,x_n]$.

\begin{definition}\label{def_hilbert_function}
  Let $I$ be a homogeneous ideal of~$R$.
  For $i\in\set N$, let $R_i$ be the $K$-vector space of all homogeneous
  polynomials of degree~$i$ (together with zero), and let $I_i=I\cap R_i$.
  \begin{enumerate}
  \item The function $HF_I\colon\set N\to\set N$, $HF_I(i)=\dim_K(R_i/I_i)$
    is called the \emph{Hilbert function} of~$I$.
  \item The generating series
    \[
    HS_I(t):=\sum_{i=0}^\infty HF_I(i)t^i\in \mathbb{Z}[[t]]
    \]
    is called the \emph{Hilbert series} of~$I$.
  \end{enumerate}
\end{definition}

\begin{proposition} \cite[\S9.3]{CoLiOS07}\label{prop:hilbertpoly}
  For every homogeneous ideal $I$ of $R$ there
  exists a polynomial $HP_I\in\set Q[t]$ such that
  $HP_I(i)=HF_I(i)$ for all sufficiently large~$i\in\set N$.
\end{proposition}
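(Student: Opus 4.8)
The plan is to establish the classical Hilbert--Serre structure theorem for the Hilbert series and then read off the polynomial by a binomial expansion. Throughout I would work with finitely generated graded $R$-modules rather than just with $R/I$, since the induction below produces auxiliary modules; note that $HF_I$ is precisely the Hilbert function of the graded module $R/I$. The one fact used repeatedly is that $\dim_K(\cdot)$ is additive on short exact sequences of graded modules with degree-zero maps, so that taking degree-$i$ components and forming generating series shows the corresponding Hilbert series are additive (with alternating signs around a longer exact sequence).

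First I would prove, by induction on the number $N$ of variables, that over any polynomial ring $K[y_1,\dots,y_N]$ every finitely generated graded module $M$ has Hilbert series $HS_M(t)=g_M(t)/(1-t)^N$ for some $g_M\in\mathbb{Z}[t]$. The base case $N=0$ is immediate: a finitely generated graded $K$-module is a finite-dimensional graded vector space, so $HS_M$ is a polynomial. For the inductive step, single out the variable $y_N$ and consider the exact sequence of graded modules
\[
0\longrightarrow (0:_M y_N)(-1)\longrightarrow M(-1)\xrightarrow{\;\cdot\, y_N\;} M\longrightarrow M/y_NM\longrightarrow 0 ,
\]
where $(-1)$ denotes the degree shift $M(-1)_i=M_{i-1}$. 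The outer two modules are annihilated by $y_N$, hence are finitely generated graded modules over $K[y_1,\dots,y_{N-1}]$, so the induction hypothesis applies to them. Additivity of Hilbert series on this four-term sequence gives
\[
(1-t)\,HS_M(t)=HS_{M/y_NM}(t)-t\,HS_{(0:_M y_N)}(t),
\]
and the right-hand side has denominator $(1-t)^{N-1}$; dividing by $1-t$ yields the claim for $M$.

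Applying this to $M=R/I$ (so $N=n+1$) gives $HS_I(t)=h(t)/(1-t)^d$ after cancelling all common factors of $1-t$, with $0\le d\le n+1$ and $h(1)\neq 0$; if $d=0$ then $HF_I(i)=0$ for large $i$ and one takes $HP_I=0$. Otherwise, expanding $(1-t)^{-d}=\sum_{i\ge 0}\binom{i+d-1}{d-1}t^i$ and writing $h(t)=\sum_{k=0}^m h_k t^k$, comparison of coefficients gives
\[
HF_I(i)=\sum_{k=0}^m h_k\binom{i-k+d-1}{d-1}\qquad\text{for all } i\ge m .
\]
Each map $i\mapsto\binom{i-k+d-1}{d-1}$ is a polynomial in $i$ of degree $d-1$ with rational coefficients, so the right-hand side defines the desired $HP_I\in\mathbb{Q}[t]$, of degree $d-1$, agreeing with $HF_I$ for all $i\ge m=\deg h$.

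The only genuinely delicate part is the inductive step: one must get the degree shift in the exact sequence right and, crucially, observe that $(0:_M y_N)$ and $M/y_NM$ really are finitely generated modules over a polynomial ring in one fewer variable (they are, being killed by $y_N$) so that the induction hypothesis applies. Everything else --- additivity of vector-space dimension and the binomial identity --- is routine bookkeeping.
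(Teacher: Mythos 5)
Your proof is correct. Note, though, that the paper offers no proof of its own for this statement: it is quoted from Cox--Little--O'Shea, \S9.3, where the standard argument reduces to a monomial ideal (the ideal of leading terms has the same Hilbert function as $I$) and then counts the monomials outside a monomial ideal to see that the count is eventually polynomial. You instead prove the Hilbert--Serre rationality theorem by induction on the number of variables, using additivity of Hilbert series along the four-term exact sequence $0\to(0:_M y_N)(-1)\to M(-1)\to M\to M/y_NM\to 0$, and then extract $HP_I$ by expanding $(1-t)^{-d}$ binomially; this is the Atiyah--Macdonald/Eisenbud route. It is a genuinely different and somewhat stronger argument: it applies to arbitrary finitely generated graded modules, it gives the rational form $h(t)/(1-t)^d$ of the Hilbert series (which meshes nicely with the $(1-t^{d})$-factor manipulations in Proposition~\ref{prop_HS_nonzerodiv} and Proposition~\ref{prop_HS_regular_sequence} of the paper), and it identifies $\deg HP_I=d-1$; the price is that it presupposes a bit more module-theoretic machinery, and you should say explicitly that $(0:_M y_N)$ is finitely generated because $R$ is Noetherian --- being killed by $y_N$ only explains why it is a module over the smaller polynomial ring, not why it is finitely generated. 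The CLO route, by contrast, is more elementary and combinatorial but tied to the ideal case. Either way, the statement is established.
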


\begin{definition}
  Let $I$ be a homogeneous ideal of~$R$.
  \begin{enumerate}
  \item The polynomial $HP_I$ from Prop.~\ref{prop:hilbertpoly} is called
    the \emph{Hilbert polynomial} of~$I$.
  \item The \emph{(Hilbert) dimension} of $I$ is defined as 
  $\dim I:=\deg(HP_{I})$.
  \end{enumerate}
\end{definition}

\begin{definition}
  We say that a chain of prime ideals $\mathfrak{p}_0\subsetneq
  \mathfrak{p}_1\subsetneq\dots\subsetneq\mathfrak{p}_n$ has length~$n$. If
  $I$ is an ideal of $K[x_1,\dots,x_n]$, the Krull dimension of~$I$,
  denoted $\kdim (I)$, is the maximum of the lengths of the chains
  of prime ideals containing $I$.
\end{definition}
  
  If $I$ is an ideal in $K[x_1,\dots,x_n]$ then its Krull dimension is also
  the Hilbert dimension of its homogenization in $R$. Conversely, the
  Hilbert dimension of a homogeneous ideal is one less than its Krull
  dimension. In particular if $I$ is a homogeneous prime ideal in $R$, then its
  Hilbert dimension is one less than the transcendence
  degree of $\mathrm{Frac} \left(\nicefrac{R}{I}\right)$ over~$K$.
  If $I$ is homogeneous but not prime, then its Hilbert dimension is the maximum
  Hilbert dimension of the prime ideals that contain it.

\begin{proposition}\label{prop_HS_nonzerodiv}%\todo{citation}
  Let $I$ be a homogeneous ideal of $R$ and let
  $P\in R$ be a homogeneous polynomial of degree~$d$.
  Then
  \[
  (\dim I)-1\leq\dim(I+PR)\leq\dim I.
  \]
  Furthermore if $P$ is not a zero
  divisor in $\nicefrac{R}{I}$ then
  \[
  HS_{I+\<P>}(t)=(1-t^d)HS_{I}(t).
  \]
  In particular, $\dim(I+\<P>)=(\dim I)-1$.
\end{proposition}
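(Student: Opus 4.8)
The plan is to work with the short exact sequence of graded $R$-modules induced by multiplication by $P$. First I would fix the graded module $M = R/I$ and consider, for each degree $i$, the $K$-linear map $\mu_P\colon M_{i-d}\to M_i$ given by multiplication by the homogeneous polynomial $P$ of degree $d$. Its cokernel in degree $i$ is exactly $(R/(I+\langle P\rangle))_i$, so from the exact sequence
\[
0 \longrightarrow \ker(\mu_P)_i \longrightarrow M_{i-d} \xrightarrow{\ \mu_P\ } M_i \longrightarrow \bigl(R/(I+\langle P\rangle)\bigr)_i \longrightarrow 0
\]
and additivity of $\dim_K$ on exact sequences we get, for every $i$,
\[
HF_{I+\langle P\rangle}(i) = HF_I(i) - HF_I(i-d) + \dim_K \ker(\mu_P)_i .
\]
When $P$ is a non-zero-divisor on $M$, the kernel term vanishes for all $i$, so $HF_{I+\langle P\rangle}(i) = HF_I(i) - HF_I(i-d)$, and summing against $t^i$ gives $HS_{I+\langle P\rangle}(t) = (1-t^d)HS_I(t)$, which is the stated identity (with the convention $HF_I(j)=0$ for $j<0$).

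Next I would extract the dimension statements. Writing $HS_I(t) = h(t)/(1-t)^{n+1}$ with $h\in\mathbb{Z}[t]$, the dimension $\dim I = \deg HP_I$ equals the order of the pole of $HS_I$ at $t=1$ minus one; equivalently $\dim I + 1$ is the multiplicity of $(1-t)$ in the denominator after cancellation. In the non-zero-divisor case, multiplying by $(1-t^d) = (1-t)(1+t+\dots+t^{d-1})$ introduces exactly one extra factor $(1-t)$ in the numerator (the second factor does not vanish at $t=1$), which lowers the pole order by one, hence $\dim(I+\langle P\rangle) = \dim I - 1$. This also handles the ``in particular'' clause.

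For the general two-sided bound $(\dim I)-1 \le \dim(I+\langle P\rangle)\le \dim I$, I would argue via the Krull-dimension characterization recalled just before the proposition: the varieties satisfy $V(I+\langle P\rangle) = V(I)\cap V(P) \subseteq V(I)$, and intersecting an algebraic set with a single hypersurface drops its dimension by at most one (Krull's principal ideal theorem / the affine dimension theorem). Translating back through ``Hilbert dimension of a homogeneous ideal $=$ (max over containing primes of) Krull dimension minus one'' gives both inequalities. Alternatively, and perhaps more cleanly in the graded language I have already set up: the upper bound is immediate from $I\subseteq I+\langle P\rangle$ (so $HF_{I+\langle P\rangle}(i)\le HF_I(i)$, forcing $\deg HP_{I+\langle P\rangle}\le \deg HP_I$), and for the lower bound one uses that $HF_{I+\langle P\rangle}(i) \ge HF_I(i) - HF_I(i-d)$ holds unconditionally (drop the nonnegative kernel term), so the Hilbert polynomial of $I+\langle P\rangle$ is eventually at least the polynomial $i\mapsto HP_I(i) - HP_I(i-d)$, whose degree is exactly $(\deg HP_I) - 1$ — because the leading terms cancel but the next term does not, the finite difference of a degree-$e$ polynomial being of degree $e-1$.

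The main obstacle I anticipate is purely bookkeeping rather than conceptual: making the reduction from the Hilbert-series identity to the precise statement ``$\deg HP = $ pole order at $1$ minus one'' airtight, in particular ensuring that cancellation of $(1-t)$ factors in $h(t)/(1-t)^{n+1}$ is tracked correctly and that the eventual-polynomiality from Prop.~\ref{prop:hilbertpoly} is invoked legitimately when I compare $HP_{I+\langle P\rangle}(i)$ with $HP_I(i)-HP_I(i-d)$ for large $i$. Everything else — exactness of the multiplication sequence, additivity of dimension, the definition of non-zero-divisor — is routine.
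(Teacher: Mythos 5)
Your proof is correct and its core is the same as the paper's: both arguments hinge on the multiplication-by-$P$ map on the graded pieces of $R/I$, whose injectivity when $P$ is a non-zero-divisor gives $HF_{I+\langle P\rangle}(i)=HF_I(i)-HF_I(i-d)$ and hence $HS_{I+\langle P\rangle}(t)=(1-t^d)HS_I(t)$; the paper phrases this via $\dim\im m_i$ and $\dim\coker m_i$ rather than a four-term exact sequence, which is only a cosmetic difference. The two places you diverge are minor but sound: for the two-sided bound $(\dim I)-1\leq\dim(I+\langle P\rangle)\leq\dim I$ the paper simply cites Cox--Little--O'Shea, whereas you give a self-contained argument (either Krull's principal ideal theorem through the Krull/Hilbert dimension dictionary, or --- more in the spirit of the rest of the proof --- the unconditional inequalities $HF_I(i)-HF_I(i-d)\leq HF_{I+\langle P\rangle}(i)\leq HF_I(i)$ obtained by keeping, respectively dropping, the nonnegative kernel term); and for the final dimension drop you read off the order of the pole of the Hilbert series at $t=1$, while the paper takes degrees in the identity $HP_{I+\langle P\rangle}(i)=HP_I(i)-HP_I(i-d)$ for large $i$, an argument you also sketch. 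If you go the pole-order route, you do need the standard fact that $\dim I$ equals the pole order of $HS_I$ at $t=1$ minus one (after cancellation), which is exactly the bookkeeping you flag; the finite-difference argument avoids it, at the cost of noting that the leading coefficient of $HP_I$ is positive so that the difference has degree exactly $\dim I-1$. Either way the proof closes.
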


\begin{proof}
  A proof that $(\dim I)-1\leq \dim (I+\<P>)\leq\dim I$ can be found in
  \cite[section~9.4~Theorem~3]{CoLiOS07}.
  For all $i\in\mathbb{N}$, let $R_i\subseteq R$ and $I_i\subseteq I$ be as defined in
  Definition~\ref{def_hilbert_function}.
  Consider the map $m_i\colon R_i/I_i\to R_{i+d}/I_{i+d}$, $m_i(Q)=PQ$.
  If $P$ is not a zero divisor in~$R/I$, then $m_i$ is injective, so
  \[
  \dim\im m_i=\dim(R_i/I_i)=HF_I(i)
  \]
  and
  \begin{alignat*}1
    \dim\coker m_i
    &=\dim (R_{i+d}/I_{i+d})/\im m_i\\
    &= \dim R_{i+d}/(I_{i+d} + P I_i)\\
    &= HF_{I + \<P>}(i+d).
  \end{alignat*}
  It follows that
  \begin{alignat*}1
    HF_I(i+d)
    &= \dim R_{i+d}/I_{i+d}\\
    &= \dim \im m_i + \dim\coker m_i\\
    &= HF_I(i) + HF_{I + \<P>}(i+d)
  \end{alignat*}
  for every~$i$. This implies the claim about $HS_{I+\<P>}$.

  For sufficiently large $i$, we have $HF_{I}(i)=HP_{I}(i)$, thus
  \[
  HP_{I+\<P>}(i)=HP_I(i)-HP_I(i-d),
  \]
  hence $HP_{I+\<P>}$ is a polynomial of degree exactly $\dim I-1$.
\end{proof}

This proposition implies that a homogeneous ideal generated by $r$ elements
in $K[s,x_1,\dots,x_n]$ is of dimension at least $n-r$. If we have an
equality, the corresponding projective variety is called a \emph{complete intersection.}

\begin{definition} Let $(P_1,\dots,P_k)$ be a tuple of homogeneous
  polynomials in~$R$.
  \begin{enumerate}
  \item $(P_1,\dots,P_k)$ is called a \emph{complete intersection}
    if $P_1,\dots,P_k$ generate an ideal of (Hilbert) dimension $n-k$.
  \item $(P_1,\dots,P_k)$ is called a \emph{regular sequence}
    if $P_{i+1}$ is not a zero divisor in
    $\nicefrac{R}{\langle P_1,\dots,P_i\rangle}$ for any $1\leq i<k$.
  \end{enumerate}
\end{definition}

It should be noted that according to this definition, if $(P_1,\dots,P_k)$
is a complete intersection, then the projective variety corresponding to the
ideal $\<P_1,\dots,P_k>$ is a complete intersection.
However, the converse is not true: the ideal $\langle P_1,\dots,P_k\rangle$
might be of dimension strictly greater than $n-k$ and nevertheless admit
a smaller basis. 

\begin{proposition}\label{prop_HS_regular_sequence}
  Let $(P_1,\dots,P_k)$ be a regular sequence of homogeneous polynomials in~$R$,
  and let $d_1,\dots,d_k$ be their respective degrees. Then
  \[
  HS_{\langle P_1,\dots,P_r\rangle}(t)=\frac{(1-t^{d_1})\cdots(1-t^{d_k})}{(1-t)^n}
  \]
  and $\dim\langle P_1,\dots,P_k\rangle=n-k$.
\end{proposition}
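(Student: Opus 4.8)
The plan is to prove both assertions simultaneously by induction on the length~$k$ of the sequence, with Proposition~\ref{prop_HS_nonzerodiv} as essentially the only ingredient: the real work has already been done there, so what remains is bookkeeping. Two preliminary remarks make the induction run smoothly. First, we may assume all of the $P_i$ are nonzero, since otherwise the degrees $d_i$ are not even defined and there is nothing to prove. Second, being a regular sequence passes to initial segments, because the conditions defining ``$(P_1,\dots,P_{k-1})$ is a regular sequence'' are a subset of those defining ``$(P_1,\dots,P_k)$ is a regular sequence''; this is what makes the induction legitimate.

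For the base case $k=0$ the empty sequence generates the zero ideal, so $\nicefrac{R}{(0)}=R$, the empty product of the factors $1-t^{d_i}$ equals~$1$, and both assertions reduce to the Hilbert series and the (Hilbert) dimension~$n$ of the polynomial ring~$R$ itself. For the inductive step, assume the claim for the regular sequence $(P_1,\dots,P_{k-1})$. Since $(P_1,\dots,P_k)$ is a regular sequence, $P_k$ is not a zero divisor in $\nicefrac{R}{\langle P_1,\dots,P_{k-1}\rangle}$, so Proposition~\ref{prop_HS_nonzerodiv} applies with $I=\langle P_1,\dots,P_{k-1}\rangle$, $P=P_k$ and $d=d_k$; it yields
\[
HS_{\langle P_1,\dots,P_k\rangle}(t)=(1-t^{d_k})\cdot HS_{\langle P_1,\dots,P_{k-1}\rangle}(t)
\]
together with $\dim\langle P_1,\dots,P_k\rangle=\dim\langle P_1,\dots,P_{k-1}\rangle-1$. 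By the induction hypothesis $HS_{\langle P_1,\dots,P_{k-1}\rangle}(t)$ is $(1-t^{d_1})\cdots(1-t^{d_{k-1}})$ times the Hilbert series of~$R$ and $\dim\langle P_1,\dots,P_{k-1}\rangle=n-(k-1)$, so substitution gives the asserted formulas for $(P_1,\dots,P_k)$ and closes the induction. In particular, this shows that every regular sequence of length~$k$ is a complete intersection in the sense defined just above.

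I do not anticipate a genuine obstacle here. All the substance sits in Proposition~\ref{prop_HS_nonzerodiv}, and the only two things one has to verify---that the regular-sequence hypothesis restricts to prefixes, and that it is exactly the non-zero-divisor condition needed to invoke that proposition at each step---are immediate from the definitions. The one mild point of care is anchoring the induction correctly; one may equally well start at $k=1$, where ``$(P_1)$ is a regular sequence'' simply means $P_1\neq0$, which is automatic in the integral domain~$R$, and then quote Proposition~\ref{prop_HS_nonzerodiv} with $I=(0)$.
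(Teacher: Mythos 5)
Your proposal is correct and follows essentially the same route as the paper: induction on $k$, with the base case given by the Hilbert series of $R$ itself and the inductive step supplied by Proposition~\ref{prop_HS_nonzerodiv} applied to $I=\langle P_1,\dots,P_{k-1}\rangle$ and $P=P_k$. The extra remarks (prefixes of regular sequences are regular sequences, nonvanishing of the $P_i$) are harmless bookkeeping that the paper leaves implicit.
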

\begin{proof}
  The proof is by induction on~$k$. If $k=0$ then $HF_R(i)$ is the
  number of monomials of degree $i$ which is
  $\binom{n+i}{n}$. Thus $HS_R(t)=(1-t)^{-n}$.
  For the induction step, apply Proposition~\ref{prop_HS_nonzerodiv}.
\end{proof}

Thus we can precisely know the Hilbert function of an ideal generated by a
regular sequence, which will be useful in the later sections. It is obvious
that regular sequences are complete intersections. The following proposition
shows that this is in fact an equivalence.

\begin{proposition}
  For a tuple $(P_1,\dots,P_k)$ of homogeneous polynomials in~$R$,
  the following properties are equivalent:
  \begin{enumerate}
  \item $(P_1,\dots,P_k)$ is a complete intersection
  \item $(P_1,\dots,P_i)$ is a complete intersection for every $i=1,\dots,k$.
  \item For every $i\in\{1,\dots,k\}$, any minimal prime ideal
    containing $\langle P_1,\dots,P_i\rangle$ is of dimension $n-i$.
  \item $(P_1,\dots,P_k)$ is a regular sequence.
  \end{enumerate}
\end{proposition}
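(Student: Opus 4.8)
The plan is to treat condition~(2) as a hub and prove the three equivalences $(1)\Leftrightarrow(2)$, $(2)\Leftrightarrow(3)$ and $(2)\Leftrightarrow(4)$; together these give the statement. Two implications are immediate from the definitions: $(2)\Rightarrow(1)$ is the instance $i=k$ of~(2), and $(3)\Rightarrow(2)$ holds because the Hilbert dimension of an ideal is the maximum of the dimensions of its minimal primes. For $(4)\Rightarrow(2)$ one only has to note that every initial segment $(P_1,\dots,P_i)$ of a regular sequence is again a regular sequence, so Proposition~\ref{prop_HS_regular_sequence} yields $\dim\langle P_1,\dots,P_i\rangle=n-i$ for each~$i$.

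For $(1)\Rightarrow(2)$ I would argue by telescoping. Put $a_i=\dim\langle P_1,\dots,P_i\rangle$, so that $a_0=\dim R=n$ (the case $k=0$ of Proposition~\ref{prop_HS_regular_sequence}) and $a_k=n-k$ by hypothesis. Proposition~\ref{prop_HS_nonzerodiv} gives $a_{i-1}-1\le a_i\le a_{i-1}$, so each difference $a_{i-1}-a_i$ equals $0$ or~$1$; as these $k$ differences sum to $a_0-a_k=k$, every one of them is~$1$, whence $a_i=n-i$ for all~$i$. For $(2)\Rightarrow(3)$ I would fix~$i$ and a minimal prime $\mathfrak p$ over $\langle P_1,\dots,P_i\rangle$: on one hand $\dim\mathfrak p\le a_i=n-i$; on the other, $\mathfrak p$ is minimal over an ideal with $i$ generators, so it has height at most~$i$ by Krull's height theorem, which in the polynomial ring $R$ means $\dim\mathfrak p\ge n-i$. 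Hence $\dim\mathfrak p=n-i$.

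The substantial implication is $(2)\Rightarrow(4)$, which I would prove by induction on~$k$. For $k=1$, condition~(2) gives $\dim\langle P_1\rangle=n-1<n=\dim R$, so $P_1\ne0$, and since $R$ is an integral domain $P_1$ is a non-zero-divisor. For the inductive step, (2) holds in particular for $(P_1,\dots,P_{k-1})$, so by the induction hypothesis that tuple is a regular sequence; write $J=\langle P_1,\dots,P_{k-1}\rangle$. It remains to show that $P_k$ is a non-zero-divisor in $R/J$. The zero divisors of the Noetherian ring $R/J$ form the union of the primes associated to~$J$, so it is enough to show that $P_k$ lies in none of them. This is where the crucial input enters: since $(P_1,\dots,P_{k-1})$ is a regular sequence, $R/J$ is Cohen--Macaulay, hence has no embedded primes, so every prime associated to $J$ is in fact a \emph{minimal} prime $\mathfrak p$ of~$J$; being minimal over $k-1$ elements, such a $\mathfrak p$ has height at most $k-1$, i.e.\ $\dim\mathfrak p\ge n-k+1$. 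If $P_k$ belonged to such a $\mathfrak p$, then $\mathfrak p\supseteq J+\langle P_k\rangle=\langle P_1,\dots,P_k\rangle$ would force $\dim\langle P_1,\dots,P_k\rangle\ge n-k+1$, contradicting~(2). Hence $P_k$ is a non-zero-divisor and $(P_1,\dots,P_k)$ is a regular sequence.

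The step I expect to be the real obstacle is precisely the Cohen--Macaulay/unmixedness input used above: that an ideal generated by a regular sequence in a polynomial ring has no embedded primes. I would either cite this (Macaulay's unmixedness theorem), or keep the argument self-contained by the standard induction showing that the Cohen--Macaulay property is inherited when one quotients a graded Cohen--Macaulay ring by a homogeneous non-zero-divisor --- such a quotient lowers both the Krull dimension and the depth by exactly one --- with base case the fact that the polynomial ring $R$ is Cohen--Macaulay; a graded Cohen--Macaulay algebra over a field is then automatically unmixed. Everything else rests only on Propositions~\ref{prop_HS_nonzerodiv} and~\ref{prop_HS_regular_sequence}, Krull's height theorem, and the fact that $R$ is an integral domain.
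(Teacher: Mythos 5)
Your proof is correct, but it is organized differently from the paper's. The paper proves the cycle $(4)\Rightarrow(1)\Rightarrow(2)\Rightarrow(3)\Rightarrow(4)$: $(4)\Rightarrow(1)$ via Proposition~\ref{prop_HS_regular_sequence}, $(1)\Rightarrow(2)$ by a descending induction equivalent to your telescoping argument, $(2)\Rightarrow(3)$ by an induction on $i$ that uses only the fact (quoted from \cite{Shafarevich94}) that a minimal prime over $\mathfrak q+\langle P_{i+1}\rangle$ has dimension at least $\dim\mathfrak q-1$, and finally $(3)\Rightarrow(4)$ by arguing that a zero divisor modulo $\langle P_1,\dots,P_i\rangle$ would lie in a minimal prime of that ideal, whose dimension is pinned down by~(3). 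You instead use (2) as a hub, get $(2)\Rightarrow(3)$ from Krull's height theorem plus catenarity of $R$ (fine, but note the paper's ``dimension'' is Hilbert dimension in $K[s,x_1,\dots,x_n]$, one less than Krull dimension, so the bookkeeping $\dim\mathfrak p\ge n-i$ should be checked against that convention), and prove $(2)\Rightarrow(4)$ directly by controlling the associated primes of $\langle P_1,\dots,P_{k-1}\rangle$ via Macaulay's unmixedness/Cohen--Macaulay theory. What each route buys: the paper stays within the elementary dimension facts it has already set up (Propositions~\ref{prop_HS_nonzerodiv} and~\ref{prop_HS_regular_sequence} and one theorem of Shafarevich) and never names Cohen--Macaulayness, at the price of silently passing from ``zero divisor'' to ``lies in a \emph{minimal} prime'' --- exactly the associated-versus-minimal-prime point your unmixedness input makes explicit; your version is heavier (it imports a nontrivial theorem) but more airtight on that step, and your $(4)\Rightarrow(2)$ and $(1)\Leftrightarrow(2)$ arguments coincide in substance with the paper's.
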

\begin{proof}
  Proposition~\ref{prop_HS_regular_sequence} shows that $(4)\Rightarrow(1)$.

  We show $(1)\Rightarrow(2)$ by descending induction on~$i$.
  If the tuple $(P_1,\dots,P_{i+1})$ is a complete intersection then
  $\dim \langle P_1,\dots,P_{i+1}\rangle=n-i-1$.
  \[
  \dim \langle P_1,\dots,P_{i+1}\rangle\geq \dim \langle P_1,\dots,P_{i}\rangle-1.
  \]
  Thus $\dim \langle P_1,\dots,P_{i}\rangle\leq n-i$ and since the ideal
  is generated by only $i$ elements this is an equality.
  
  We now show that $(2)\Rightarrow(3)$ by induction on~$i$. For $i=0$
  this is obvious. Suppose that every minimal prime ideal containing $\langle
  P_1,\dots,P_i\rangle$ is of dimension $n-i$. Let $\mathfrak{p}$ be a
  minimal prime ideal containing $\langle P_1,\dots,P_{i+1}\rangle$. Then
  $\dim \mathfrak{p}\leq \dim\langle P_1,\dots,P_{i+1}\rangle=n-i-1$. By
  induction hypothesis this means that $\mathfrak{p}$ is not a minimal
  ideal containing $\langle P_1,\dots,P_i\rangle$ so there exists $\mathfrak{q}$ such that
  $\langle P_1,\dots,P_i\rangle\subset \mathfrak{q}\subset\mathfrak{p}$. Thus $\mathfrak{p}$ is a minimal prime
  ideal containing $\mathfrak{q}+P_{i+1}R$. We know from
  \cite[Theorem~1.23]{Shafarevich94} that this implies that $\dim
  \mathfrak{p}\geq \dim \mathfrak{q}-1=n-i-1$.

  Let us now suppose that $(3)$ is true and show
  that $(P_1,\dots,P_r)$ is a regular sequence. Suppose that we have shown
  that $(P_1,\dots,P_i)$ is a regular sequence and show that $P_{i+1}$ is
  not a zero divisor in $\nicefrac{R}{\langle P_1,\dots,P_i\rangle}$. But
  $P_{i+1}$ can only be such a zero divisor if $P_{i+1}$ belongs to some
  minimal prime ideal $\mathfrak{p}$ containing $\langle
  P_1,\dots,P_i\rangle$. But then $\dim \langle
  P_1,\dots,P_{i+1}\rangle\geq \dim \mathfrak{p}=n-i$ by $(3)$. This
  cannot be the case as $\dim \langle P_1,\dots,P_{i+1}\rangle = n-i-1$.
\end{proof}

\section{Setting}\label{section_setting}

Let $K$ be a differential field.
This means that the field $K$ is equipped with a map $D\colon K\to K$ satisfying
$D(a+b)=D(a)+D(b)$ and $D(ab)=D(a)b+aD(b)$ for all $a,b\in K$.
Such a map is called a derivation on~$K$.
We will also use the notations $a',a'',a'''$, and $a^{(k)}$ instead of
$D(a),D^2(a),D^3(a)$, and $D^k(a)$, respectively. 
An element $c$ of $K$ is called a constant if $c'=0$. 
We denote by $C\subseteq K$ the subset of all constants of~$K$. This 
set is actually a subfield of~$K$.

Typical choices for our considerations are $K=\set Q(x)$ with $x'=1$ or $K=\set Q$.
In both cases, we have $C=\set Q$.

We shall consider functions $f_1,\dots,f_n$ that belong to a certain field~$F$
that is closed under differentiation and contains (an isomorphic copy of)~$K$.
It does not matter where the functions are
defined, but it does matter that we can view them as elements of a differential
field. 

For every $r_1,\dots,r_n\in\set N$, consider the polynomial ring $R_{r_1,\dots,r_n}$
whose coefficient field is $K$ and which has $r_1+\cdots+r_n+n$ variables that we denote by
\begin{alignat*}1
  &y_1,y_1',\hbox to4em{\dotfill}, y_1^{(r_1)},\\
  &y_2,y_2',\hbox to4.5em{\dotfill}, y_2^{(r_2)},\\
  &\vdots\\
  &y_n,y_n',\hbox to3.5em{\dotfill}, y_n^{(r_n)}.
\end{alignat*}
The naming of the variables is chosen such as to suggest a way to differentiate polynomials:
The derivative of an element of $R_{r_1,\dots,r_n}$ is defined as the element of
$R_{r_1+1,\dots,r_n+1}$ obtained by differentiating according to the usual rules for
differentiation, the derivation of~$K$, and the rules $(y_i^{(j)})'=y_i^{(j+1)}$.

We have $R_{r_1,\dots,r_n}\subseteq R_{r_1',\dots,r_n'}$ whenever $r_i\leq r_i'$ for all~$i$.
The \emph{order} of an element $P$ of $R_{r_1,\dots,r_n}$ with respect to
$y_i$ is the smallest
$k$ such that $P$  does not contain any of the variables $y_i^{(l)}$ for
$l>k$. It is denoted by $\ord_i(P)$. Note that if $P$ is independent from
$y_i$ and its derivative we find $\ord_i(P)=0$. This specific point may be
open to debate, but will not matter in the rest of this paper.
The order of $P$ is the smallest $k$ such that $P$ is contained in $R_{k,k,\dots,k}$.
Note that $\ord(P)=\max_{i=1}^n\ord_i(P)$.

Recall that $F$ is a differential field extension of $K$ which contains the
$f_i$. There exists a unique ring homomorphism
$\phi\colon R_{r_1,\dots,r_n}\to F$ which maps $y_i$ to $f_i$ and $K$ to
itself such that $\phi(P')=\phi(P)'$ for every $P\in
R_{r_1-1,\dots,r_n-1}$. Its kernel is the ideal of all algebraic relations
among $f_1,\dots,f_n$ and their derivatives up to respective
orders~$r_1,\dots,r_n$. If there is just one function
($n=1$), then for this function to be D-algebraic means that the kernel is nonzero for
sufficiently large~$r_1$. Its elements amount to differential equations satisfied by the function.

If there are several functions, we assume that for some $r_1,\dots,r_n$ we know (generators of) an
ideal~$I$ of $R_{r_1,\dots,r_n}$ that is contained in~$\ker\phi$. Typically we will not know if
$I=\ker\phi$, but we shall assume that $I$ is sufficiently large to guarantee that the functions
under consideration all are D-algebraic. This is the essence of part~2 of the following definition.

\begin{definition}\label{def:D-algebraic}
  \begin{enumerate}
  \item If $I$ is an ideal of $R_{r_1,\dots,r_n}$, then we write $I'$ for the ideal
    of $R_{r_1+1,\dots,r_n+1}$ generated by the elements of $I$ and their
      first derivatives.
  \item An ideal $I$ of $R_{r_1,\dots,r_n}$ is called \emph{D-algebraic} with
    respect to $y_l$ if there exists an $m\in\set N$ such that
    \[
    I^{(m)}\cap K[y_l,y_l',\dots,y_l^{(r_l+m)}]\neq\{0\}.
    \]
  \item An ideal $I$ of $R_{r_1,\dots,r_n}$ is called \emph{D-algebraic} if it is
    \emph{D-algebraic} with respect to all variables.
  \end{enumerate}
\end{definition}
If an ideal $I$ is D-algebraic with respect to $y_l$, then the elements of the
elimination ideal $I^{(m)}\cap K[y_l,y_l',\dots,y_l^{(r_l+m)}]$ amount to
differential equations satisfied by~$f_l$. In particular, an ideal is
D-algebraic if and only if all coordinates of all solutions (in all
sufficiently large differential field
extensions of $K$) are
D-algebraic.

For $r_1=\cdots=r_n=\infty$, we recover classical notions from the theory of differential
algebra. In this case, $R_{r_1,\dots,r_n}$ is the differential ring of differential
polynomials, $\phi$~is a differential homomorphism, an ideal $I$ of $R_{r_1,\dots,r_n}$
that is closed under differentiation is a differential ideal, and an ideal
is D-algebraic (with respect to all variables)
if and only if its differential dimension is zero. However, we will
mostly need to operate with the finite $r_1,\dots,r_n$.

We will sometimes prefer to work with homogeneous polynomials. We then use $s$ as homogenization variable
and write $R_{r_1,\dots,r_n}^h$ for the polynomial ring over $K$ whose variables are $s$ and
$y_i^{(j)}$ for $i=1,\dots,n$ and $j=0,\dots,r_i$. For a polynomial $P\in R_{r_1,\dots,r_n}$,
we write $h(P)\in R_{r_1,\dots,r_n}^h$ for its homogenization with $s$ as homogenization variable,
and for an ideal $I$ of $R_{r_1,\dots,r_n}$ we write $h(I)$ for the ideal of $R_{r_1,\dots,r_n}^h$
generated by all $h(P)$ with $P\in I$. Note that we have $h(I')=h(I)'$, i.e., the homogenization
variable behaves like a constant.

\begin{definition}
Let $P_1,\dots,P_n\in R_{r_1,\dots,r_n}$ and let $r=\sum_{i=1}^n r_i$. 
  \begin{enumerate}
    \item Let $\rho\geq 0$. The tuple $(P_1,\dots,P_n)$ is called
      \emph{D-regular} at order $\rho$ if the tuple $(h(P_j)^{(k)})_{1\leq
      j\leq n,0\leq k\leq \rho}$ is a complete intersection.
    \item The tuple $(P_1,\dots,P_n)$ is called \emph{D-regular} with
      respect to $y_l$ if it is D-regular at order $r-r_l$.
  \end{enumerate}
\end{definition}

% a b c d e f g h i j k l m n o p q r s t u v w x y z
% A B C D E F G H I J K L M N O P Q R S T U V W X Y Z

\section{Degree bounds in complete intersections}

We consider a tuple $(P_1,\dots,P_n)$ of elements of $R_{r_1,\dots,r_n}$,
where the $r_i$ are chosen as small as possible, and let $I=\<P_1,\dots,P_n>$
be the ideal they generate. We assume that this ideal is D-algebraic in
the sense of Def.~\ref{def:D-algebraic}.

Note that $I$ might not be D-algebraic even if $(h(P_1),\dots,h(P_n))$ is a complete
intersection. For example, for
\begin{alignat*}1
  &P_1=y_1''-2y_1y_1',\\
  &P_2=(y_1'-y_1^2)y_2'^2-y_1''(y_1'-y_1^2)y_2
\end{alignat*}
we have that $(h(P_1),h(P_2))$ is a complete intersection, but one can 
check that for any $c\in C$, $(c-x)^{-1}$ is a solution of~$P_1$, but
also of $y_1'-y_1^2$. Therefore, $P_2((c-x)^{-1},y_2)=0$ regardless of~$y_2$.

The goal of this section is to determine bounds on the degree of a
nonzero element in $I^{(m)}\cap K[y_i,y_i',\dots]$. Note that it is
in general not obvious for which $m$ this is true, even if bounds on
the order of the solutions are known.

\begin{example}\label{example_non_complete_intersection}
Consider the system defined by $P_1=y_1y_1''-y_1'^2$ and $P_2=(y_2-y_1)^2+(y_2'-y_1')^4$. If $(f_1,f_2)$
is a solution of $(P_1,P_2)$ then $f_2$ is actually the sum of two
D-algebraic function, $f_1$ and~$f$, where $f$ is a solution of
$y^2+y'^4=0$. This is how the example was presented in
\cite[Example~4.2]{AEMaSaTe25}. Thus $f_2$ lies in a differential field extension of
$\mathbb{Q}$ of transcendence degree $3$ and is thus solution of a
D-algebraic equation of order~$3$. However, the ideal $\langle
P_1,P_2\rangle^{(2)}$ does not have a non
trivial intersection with $\mathbb{Q}[y_2,y_2',\dots]$, as was stated in
  \cite{AEMaSaTe25}.

A closer look at the solutions of the differential
equations reveals that they can, in this example, be written in closed form. The
solutions of $y^2+y'^4$ are $0$ and polynomials of the form
\[
\frac14(-1)^k i x^2+ax+(-1)^{k+1}ia^2
\]
  with $k\in\{0,1\}$, $a\in C$ and $i^2=-1$.
All of those solutions satisfy the equation $(-1)^kiy'^2+y=0$ of
order~$1$ and of degree~$2$ rather than only an equation of degree~$4$.
Likewise, the solutions of $P_1$ are
exponential functions of the form $\lambda\exp(cx)$ ($\lambda,c\in C$) and
satisfy an equation of order and degree~$1$.

  If we were to fix $k$ and $c$ and take $Q_1=y_1'-cy_1$ and
  $Q_2=(-1)^ki(y_2'-y_1')^2+(y_2-y_1)$ instead of $P_1$ and $P_2$ and
  $J=\langle Q_1,Q_2\rangle$ instead of $I$, we find that
  $J'\cap\mathbb{C}[y_2,y_2',y_2'']\neq\{0\}$.
\end{example}

%Unfortunately we are unable at this time to circumvent this difficulty in
%all generality and must make some assumption on our equations and their
%derivated equations.

\begin{theorem}\label{main_theorem}
  Let $P_1,\dots,P_n\in R_{r_1,\dots,r_n}$ and let $l\in\{1,\dots,n\}$.
  Suppose that $r_i=\max_{j=1}^n\ord_i(P_j)$ for all $i\in\{1,\dots,n\}$.
  In addition, we assume that for each $i$, at least one the $P_j$ is not
  independent from $y_i$ or its derivatives.
  Let $d:=\prod_{j=1}^n\deg P_j$.

  Let $r_{\min}=\sum_{i=1}^n r_i$, $r\geq r_{\min}$
  $(P_1,\dots,P_n)$ is \emph{D-regular} at order $r-r_l$. Then the elimination ideal
  \[
  \langle P_1,\dots ,P_n\rangle^{(r-r_l)}\cap
  K[y_l,y_l',\dots]
  \]
  contains a nonzero element of order $r$ and degree~$k$ as soon as
  $k>(r+1)(d^{1+(r_{\min}-r_l)/(r-r_{\min}+1)}-1)$.
\end{theorem}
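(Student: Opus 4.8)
The plan is to pass to the homogeneous setting and run a dimension count. Write $m=r-r_l$ and work in $R'=R_{r_1+m,\dots,r_n+m}^h$, which has $N:=1+r_{\min}+n(m+1)$ variables. Let $A$ be the ideal of $R'$ generated by the $n(m+1)$ polynomials $h(P_j)^{(k)}$ for $1\le j\le n$ and $0\le k\le m$. By hypothesis $(P_1,\dots,P_n)$ is \emph{D-regular} at order $r-r_l=m$, which says precisely that this generating tuple is a complete intersection, hence a regular sequence by the equivalence established above. Proposition~\ref{prop_HS_regular_sequence} then gives the Hilbert series of $A$ explicitly:
\[
HS_A(t)=\frac{\prod_{j=1}^n(1-t^{d_j})^{m+1}}{(1-t)^N},\qquad d_j:=\deg P_j .
\]
Since $s$ behaves like a constant under homogenization and differentiation, $h(P_j)^{(k)}$ is the homogenization of $P_j^{(k)}\in\langle P_1,\dots,P_n\rangle^{(m)}$; hence every $Q\in A$ satisfies $Q|_{s=1}\in\langle P_1,\dots,P_n\rangle^{(m)}$, and if in addition $Q$ lies in the subring $S:=K[s,y_l,y_l',\dots,y_l^{(r)}]$ then $Q|_{s=1}\in\langle P_1,\dots,P_n\rangle^{(m)}\cap K[y_l,\dots,y_l^{(r)}]$. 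Moreover $Q\mapsto Q|_{s=1}$ is injective on each homogeneous component of $S$, so a nonzero homogeneous $Q\in A\cap S$ gives a nonzero element of the elimination ideal.

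So it suffices to produce, for every $k$ exceeding the stated threshold, a nonzero homogeneous element of $A\cap S$ in degree $k$. Counting in degree $k$ inside $R'$,
\[
\dim_K(A_k\cap S_k)\ \ge\ \dim_K A_k+\dim_K S_k-\dim_K R'_k\ =\ \dim_K S_k-HF_A(k),
\]
and $\dim_K S_k=\binom{k+r+1}{r+1}$ because $S$ has $r+2$ variables. It remains to check $HF_A(k)<\binom{k+r+1}{r+1}$ above the threshold. Write $HS_A(t)=g(t)/(1-t)^{r_{\min}+1}$ with $g(t)=\prod_{j=1}^n(1+t+\dots+t^{d_j-1})^{m+1}$, a polynomial with nonnegative coefficients satisfying $g(1)=\prod_j d_j^{m+1}=d^{m+1}$. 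Then
\[
HF_A(k)=\sum_{e\ge 0}g_e\binom{k-e+r_{\min}}{r_{\min}}\ \le\ g(1)\binom{k+r_{\min}}{r_{\min}}\ =\ d^{m+1}\binom{k+r_{\min}}{r_{\min}} .
\]

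The last step is an elementary binomial estimate. We have $\binom{k+r+1}{r+1}\big/\binom{k+r_{\min}}{r_{\min}}=\prod_{a=r_{\min}+1}^{r+1}\frac{k+a}{a}\ \ge\ \big(\frac{k+r+1}{r+1}\big)^{\,r-r_{\min}+1}$, since there are $r-r_{\min}+1$ factors and the smallest one is $\frac{k+r+1}{r+1}$. Hence $HF_A(k)<\binom{k+r+1}{r+1}$ as soon as $d^{m+1}<\big(\frac{k+r+1}{r+1}\big)^{r-r_{\min}+1}$, i.e., taking $(r-r_{\min}+1)$-th roots and using $\frac{m+1}{r-r_{\min}+1}=\frac{r-r_l+1}{r-r_{\min}+1}=1+\frac{r_{\min}-r_l}{r-r_{\min}+1}$, as soon as $k>(r+1)\big(d^{\,1+(r_{\min}-r_l)/(r-r_{\min}+1)}-1\big)$, which is exactly the claimed bound. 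This yields a nonzero element of the elimination ideal of degree at most $k$ and order at most $r$; multiplying it by a suitable power of $y_l^{(r)}$ to make the order exactly $r$ and then by a power of $y_l$ to make the degree exactly $k$ keeps it in the ideal and in $K[y_l,\dots,y_l^{(r)}]$, which finishes the proof (a minor bookkeeping point, handled by shrinking $k$ by one beforehand when the degree would overshoot, which the threshold permits).

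The part that requires care is the setup of the first paragraph: identifying \emph{D-regularity} with the regular-sequence condition that makes $HS_A$ exactly computable, and arranging the homogenization so that $A\cap S$ dehomogenizes into the elimination ideal while nonzeroness is preserved. Once that is in place, only the shape of the binomial bound is delicate: using the smallest factor $\frac{k+r+1}{r+1}$ of the ratio of binomial coefficients (rather than some average) is precisely what reproduces the factor $(r+1)$ and the $-1$ in the stated threshold.
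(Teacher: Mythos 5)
Your proof is correct and takes essentially the same route as the paper's: homogenize, use D-regularity to identify the tuple $(h(P_j)^{(k)})$ as a regular sequence with explicitly known Hilbert series, bound the Hilbert function by $d^{\,r-r_l+1}\binom{k+r_{\min}}{r_{\min}}$, compare with $\binom{k+r+1}{r+1}$ via the same smallest-factor estimate on the ratio of binomials, and dehomogenize at $s=1$. The only differences are presentational (a subspace-intersection dimension count instead of a kernel argument, and the bound via $g(1)$ instead of the paper's induction on $n$); your closing adjustment to force the order and degree to be exact is not needed, as the statement is read as a bound, just as in the paper's own proof.
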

\begin{proof}
  First note that for any non constant polynomial $P\in R_{r_1,\dots,r_n}$
  we have $\deg(P')=\deg(P)$. For each
  $i\in\{1,\dots,n\}$ we set $d_i=\deg P_i$. Let
  $I:=h(\<P_1,\dots,P_n>)\subset R^h_{r_1,\dots,r_n}$.

  We know from Proposition~\ref{prop_HS_regular_sequence} that
  \begin{alignat*}1
    HS_{I^{(r-r_l)}}(t)&=
    \frac{\prod_{i=1}^{n}(1-t^{d_i})^{r-r_l+1}}{(1-t)^{r_{\min}+n(r-r_l+1)}}\\
    &=(1-t)^{-r_{\min}}\prod_{i=1}^n(1+t+\dots +t^{d_i-1})^{r-r_l+1}.
  \end{alignat*}
  We claim for any sequence of stricly positive integers
  $(u_n)_{n\in\mathbb{N}^*}\in (\mathbb{N}^*)^{\mathbb{N}^*}$,
  if we write
  \[(1-t)^{-r_{\min}}\prod_{i=1}^n(1+t+\dots+t^{u_i-1})=\sum_{k=0}^\infty
  a_{n,k}t^k\]
  then $a_{n,k}\leq \binom{r_{\min}+k}{k}\prod_{i=1}^n u_i$. This is obviously true
  for $n=0$. Then if the result is true for $n$ then
  \[
    (1-t)^{-r_{\min}}\prod_{i=1}^{n+1}(1+t+\dots+t^{u_i-1})=(1+t+\dots+t^{u_{n+1}-1})\sum_{k=0}^\infty
  a_{n,k}t^k
\]
  Thus \begin{align*}
    a_{k,n+1}&=\sum_{j=0}^{u_{n+1}-1}a_{k-j,n}\\
    &\leq \left(\prod_{j=0}^n u_j\right)\sum_{i=0}^{u_{n+1}-1}\binom{r_{\min}
    +k-j}{k-j}\\
    &\leq u_{n+1}\left(\prod_{j=0}^n u_j\right)\binom{r_{\min}+k}{k}
  \end{align*}
  which proves the statement by induction on $n$. It follows that
  \[
    HF_{I^{(r-r_l)}}(k)\leq
    d^{r-r_l+1}\binom{r_{\min}+k}k.
  \]
  The space $V_k\subseteq K[s,y_l,\dots,y_l^{(r)}]$ of homogeneous polynomials
  of degree $k$ has dimension $\binom{r+1+k}k$ over~$K$. By the
  definition of the Hilbert polynomial, its image in
  \[
  R^h_{r+r_1,\dots,r+r_n}/I^{(r-r_l)}
  \]
  under the natural morphism is a vector space of dimension at most
  $d^{r-r_l+1}\binom{r_{\min}+k}k$.

  If $k> (1+r)(d^{1+(r_{\min}-r_l)/(r-r_{\min}+1)}-1)$ then we have
  \[
    \prod_{i=1}^{1+r-r_{\min}}\frac{r_{\min}+i+k}{r_{\min}+i}
    >\left(1+\frac{k}{r+1}\right)^{1+r-r_{\min}}\geq d^{r-r_l+1}
  \]
  and so $\binom{r+1+k}{r+1}>HP_{I^{(r-r_l)}}(k)$.
  This means that $V_k$ contains nonzero polynomials that are mapped to zero.
  By setting $s=1$, any such element translates into a nonzero element of
  $$\langle P_1,\dots ,P_n\rangle^{(r-r_l)}\cap K[y_l,y_l',\dots]$$ of the
  announced order and degree.
\end{proof}

In view of the exponential size of the bound of Theorem~\ref{main_theorem},
we were not able to check experimentally how tight it is. The required
computations were too large. However, to at least get some idea, we
carried out some experiments for a similar, though different problem.
Given $n+1$ polynomials $P_0,\dots,P_n$ in $K[x_1,\dots,x_n]$ of degree~$d$,
it is clear that they must be algebraically dependent. What is the typical
degree of their algebraic relation? A calculation similar to the proof of
Theorem~\ref{main_theorem} shows that there is an algebraic relation of
total degree $k$ as soon as $\binom{n+1+k}{n+1}>\binom{n+kd}{n}$.
This is true for $k\geq (n+1)(d^n-1)$. However, experiments suggest
that an algebraic relation already exists for $k\geq d^n$. We do not
know the reason for this discrepancy, but it suggests that bound of
Theorem~\ref{main_theorem} perhaps also overshoots by a factor of~$r+1$.

The hypothesis that the family $(P_1,\dots,P_n)$ is D-regular (for any
variable) is not trivial in general, even
if the ideal $\langle P_1,\dots, P_n\rangle$ is D-algebraic (with respect to any
variable), as
shown in Example~\ref{example_non_complete_intersection}. Outside of the
differential context, a generic family of polynomials is a complete
intersection, similarly to generic intersections of hyperplanes.
The family of polynomials considered here however is not random as it is
composed of the successive derivative of given differential polynomials.
Nevertheless, experiments conducted on random operators of small orders
and degrees seem to indicate that this hypothesis is often satisfied.

Theorem~\ref{main_theorem} also shows that, in the case of complete
intersection, going to a higher derivative order may provide equations of
smaller degrees. This phenomenon is well-known in the case of linear differential
operators~\cite{jaroschek13a} and here finds its nonlinear counterpart. 
Two things should be noted however. The first is that unlike in the linear case,
the ``order-degree curve'' that we obtain here is increasing for big enough~$r$.
This is an artifact of the approximations used during the proof of the theorem. The
second is that, unlike for linear differential operators, the size of a polynomial
does not linearly depend on its order and its degree. It would therefore be more relevant to compare 
how the total number of monomials depends on the order.
We have conducted tests for a few values of $d$, $r_l$ and $r_{min}$, whose
results are presented in Figure~\ref{figure1}. The graphs in
Figure~\ref{figure1} show the evolution along $r-r_{\min}$ of the number of monomials of
degree $k$ in $K[s,y_l,y_l',\dots,y_l^{r+r_l}]$ for the smallest $k$ for
which this number is strictly bigger than
$d^{r-r_l+1}\binom{r_{\min}+k}{k}$, at which point we can ensure the
existence of a nontrivial element in the intersection ideal.  Those tests suggest that the
number of possible monomials drops significantly for the first few values
of $r>r_{\min}$. It should be noted that the number of monomials
presented here only results from comparing the number of equations and
the number of variables in the linear system considered in the proof of Theorem~\ref{main_theorem}. 
It was not obtained by actually solving these linear systems, for they are
too big to handle, so these curves may overshoot.
\begin{figure}
  \includegraphics[width=0.9\linewidth]{./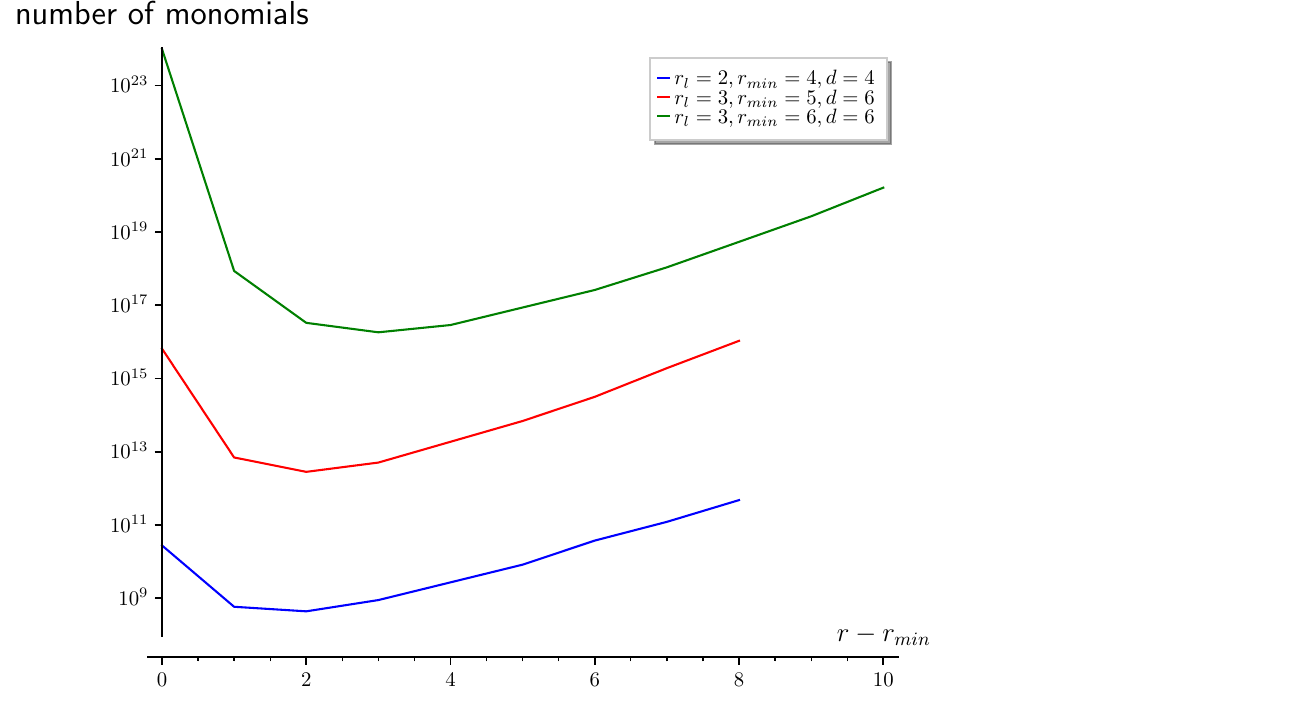}
  \caption{order-number of monomials curves}
  \label{figure1}
\end{figure}

\section{Corollaries on degree bounds for algebraic operations}

\begin{proposition}\label{plus_times}
  Let $f_1,\dots,f_n$ be D-algebraic functions, as well as $P_1,\dots,P_n\in
  R_{r_1,\dots,r_n}$ such that $P_i\in K[y_i,y_i',\dots,y_i^{(r_i)}]$ for
  all $i\in\{1,\dots,n\}$ and
  $Q\in R_{r_1,\dots,r_n}$. We note $r_{\min}:=\sum_{i=1}^n
  r_i$ and $d:=\prod_{i=1}^n\deg(P_i)$. Let $r\geq
  r_{\min}$ and assume that
  \begin{itemize}
    \item $P_i(f_i)=0$ for all $i\in\{1,\dots,n\}$.
    \item  $P_i$ is D-regular at order $r$ for all $i\in\{1,\dots,n\}$.
  \end{itemize}
  Then $Q(f_1,\dots,f_n)$ is solution of a D-algebraic equation of order
  $r$ and of degree $k$ or less, as soon as
  \[
  k>(r+1)((\deg(Q)d)^{1+r_{\min}/(r-r_{\min}+1)}-1).
  \]
\end{proposition}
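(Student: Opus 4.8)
The plan is to reduce Proposition~\ref{plus_times} to Theorem~\ref{main_theorem} by introducing one additional variable and one additional equation that encodes $Q$. Concretely, I would set $n' = n+1$, keep the orders $r_1,\dots,r_n$ for the first $n$ variables, and append a new variable $y_{n+1}$ with $r_{n+1}=0$; in the enlarged ring $R_{r_1,\dots,r_n,0}$ consider the tuple $(P_1,\dots,P_n,P_{n+1})$ where $P_{n+1} := y_{n+1} - Q(y_1,\dots,y_n)$. The function tuple under consideration is $(f_1,\dots,f_n,g)$ with $g = Q(f_1,\dots,f_n)$, and we want to eliminate with respect to $y_{n+1}=:y_l$, so $l = n+1$ and $r_l = r_{n+1} = 0$. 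Then $r_{\min}$ in the sense of Theorem~\ref{main_theorem} is $\sum_{i=1}^{n+1} r_i = r_1+\dots+r_n+0$, which matches the $r_{\min}$ of the proposition, and $d$ in Theorem~\ref{main_theorem} is $\prod_{i=1}^{n+1}\deg P_i = \deg(Q)\cdot\prod_{i=1}^n\deg P_i = \deg(Q)\,d$ (here using $\deg P_{n+1} = \deg Q$, assuming $\deg Q \geq 1$; the degenerate case $\deg Q \le 0$ is trivial since then $g$ is a constant). Plugging $r_l = 0$ into the bound of Theorem~\ref{main_theorem} gives exactly the exponent $1 + r_{\min}/(r - r_{\min} + 1)$ and the stated threshold $k > (r+1)((\deg(Q)d)^{1+r_{\min}/(r-r_{\min}+1)} - 1)$.

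The three hypotheses of Theorem~\ref{main_theorem} must be checked for this enlarged tuple. First, the order condition $r_i = \max_{j}\ord_i(P_j)$: for $i\le n$ we have $P_i \in K[y_i,\dots,y_i^{(r_i)}]$ so $\ord_i(P_i)=r_i$ by assumption (and here I would invoke that $r_i$ was chosen minimal, or simply observe that adding $P_{n+1}$ does not raise $\ord_i$ since $Q$ has order at most $r_i$ in $y_i$ — this needs $Q\in R_{r_1,\dots,r_n}$, which is exactly the hypothesis); for $i = n+1$ we have $\ord_{n+1}(P_{n+1}) = 0 = r_{n+1}$ since $P_{n+1}$ is linear in $y_{n+1}$ with no derivatives. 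Second, the non-independence condition: each $y_i$ appears in some $P_j$ — for $i\le n$ it appears in $P_i$ (or in $Q$), and $y_{n+1}$ appears in $P_{n+1}$. Third and most importantly, D-regularity at order $r - r_l = r$: we must show $(h(P_j)^{(k)})_{1\le j\le n+1,\,0\le k\le r}$ is a complete intersection in $R^h_{r+r_1,\dots,r+r_n,r}$. The plan here is to use that the $P_1,\dots,P_n$ involve pairwise disjoint sets of variables, each $P_i$ being D-regular at order $r$ by hypothesis, so the concatenation of their derivative families is a regular sequence (disjoint variable blocks multiply Hilbert series, and products of complete-intersection Hilbert series over disjoint variables again have the complete-intersection shape); then appending the family $(h(P_{n+1})^{(k)})_{0\le k\le r}$ — whose members are, up to the homogenization variable, $y_{n+1}^{(k)}$ minus a polynomial in the other variables — is harmless because these can be checked to be non-zero-divisors successively, as each $h(P_{n+1})^{(k)}$ has leading behavior in the fresh variable $y_{n+1}^{(k)}$.

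The main obstacle I expect is precisely this last D-regularity verification: turning ``each $P_i$ is D-regular at order $r$'' into ``the combined tuple of all derivatives of all $P_i$ together with all derivatives of $P_{n+1}$ is a complete intersection.'' The disjoint-variable argument for $P_1,\dots,P_n$ is morally a statement that $HS$ is multiplicative over tensor products of polynomial rings, and combined with Proposition~\ref{prop_HS_regular_sequence} it should give that the concatenated family is a regular sequence with the expected Hilbert series; one must be a little careful because the derivatives $P_i^{(k)}$ for a single $i$ live in a block of $r_i + r + 1$ variables and D-regularity at order $r$ of $P_i$ is exactly the statement that $(h(P_i)^{(k)})_{0\le k\le r}$ is a complete intersection there. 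For $P_{n+1}$, the key point is that $h(P_{n+1})^{(k)}$ contains the monomial $y_{n+1}^{(k)} s^{\deg Q - 1}$ and no other term involving $y_{n+1}^{(k)}$ or higher, so modulo the ideal generated by everything preceding it (which involves none of the $y_{n+1}^{(j)}$) it is not a zero divisor — making the full family a regular sequence, hence a complete intersection. Once D-regularity is in hand, Theorem~\ref{main_theorem} applies verbatim, and the elimination ideal $\langle P_1,\dots,P_{n+1}\rangle^{(r)}\cap K[y_{n+1},y_{n+1}',\dots]$ contains a nonzero element of order $r$ and degree $k$ for $k$ above the threshold; since this ideal is contained in $\ker\phi$ for the homomorphism sending $y_{n+1}\mapsto g = Q(f_1,\dots,f_n)$, that element is the desired D-algebraic equation for $Q(f_1,\dots,f_n)$.
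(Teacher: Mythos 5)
Your global strategy coincides with the paper's: adjoin one new variable and the polynomial $y_{n+1}-Q$ (the paper writes $z-Q$), note that $r_{\min}$ is unchanged, that the degree product becomes $\deg(Q)\,d$, that $r_l=0$, and invoke Theorem~\ref{main_theorem}; your bookkeeping for the order/degree threshold is correct. The genuine gap is in the D-regularity verification, which is exactly the content of the paper's proof, and both of your key claims there fail as stated. First, the blocks are \emph{not} disjoint after homogenization: every block $K[s,y_i,\dots,y_i^{(r_i+r)}]$ contains the homogenization variable~$s$, so Hilbert series do not multiply (the product of the block series counts $s$ once per block, so it cannot be the Hilbert series of the combined quotient, let alone have the complete-intersection shape for the combined ring), and regular sequences in $s$-sharing blocks need not concatenate: $su$ is regular in $K[s,u]$ and $sv$ is regular in $K[s,v]$, yet $(su,sv)$ is not a regular sequence (nor a complete intersection) in $K[s,u,v]$, because $u\cdot sv\in\langle su\rangle$ while $u\notin\langle su\rangle$. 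What must be excluded is that two of the ideals $\langle (h(P_i))^{(k)}\rangle_{k\le r}$ have top-dimensional components inside the hyperplane at infinity $\{s=0\}$; your argument never looks at the locus $s=0$. Second, the appended equations are $(h(y_{n+1}-Q))^{(k)}=s^{\deg Q-1}y_{n+1}^{(k)}-(h(Q))^{(k)}$, and a linear polynomial $cT-a$ over a ring $B$ is a zero divisor in $B[T]$ as soon as some nonzero $b\in B$ annihilates both $c$ and $a$ (McCoy); since $c=s^{\deg Q-1}$ can itself be a zero divisor modulo the preceding ideal (again: components at infinity on which $(h(Q))^{(k)}$ also vanishes), ``fresh variable with leading coefficient a power of $s$'' does not give regularity when $\deg Q\ge 2$. (For $\deg Q=1$ that step is fine, since the polynomial is monic in the fresh variable.)

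The paper establishes D-regularity by a different mechanism that sidesteps the shared-$s$ problem: it descends to the affine (non-homogenized) ideals, where the variable blocks really are disjoint, proves $\kdim(I^{(r)})\le\kdim(I_1^{(r)})$ via the surjection obtained by eliminating the $z^{(k)}$ with the relations $z^{(k)}-Q^{(k)}$, computes $\kdim(I_1^{(r)})=r_{\min}$ from the tensor-product decomposition of $R/I_1^{(r)}$ together with the D-regularity of each $P_i$, and then uses that the dimension cannot drop below $r_{\min}$. (Even there the return from Krull dimension to the homogeneous ideal generated by the $(h(P_j))^{(k)}$ is terse about components at infinity, but it reduces the difficulty to that single point.) To repair your version you would either have to supply this missing control at $s=0$ or reroute through the affine Krull-dimension argument as the paper does; as written, the two steps quoted above are genuine gaps rather than routine omitted details.
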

\begin{proof}
  It is enough to show that the family
  $(P_1,\dots,P_n,z-Q(y_1,\dots,y_n))\in R_{r_1,\dots,r_n}[z]$ is D-regular
  at order~$r$.
  
  Let $I:=\langle P_1,\dots,P_n,z-Q(y_1,\dots,y_n)\rangle \subset
  R_{r_1,\dots,r_n}[z]$ and $I_1:=\langle P_1,\dots,P_n\rangle\subset
  R_{r_1,\dots,r_n}$. 
  We know that $\dim h(I)^{(r)}=\kdim (I^{(r)})$. There is a natural morphism
  \[
  R_{r_1+r,\dots,r_n+r}/I_1^{(r)}\rightarrow
  R_{r_1+r,\dots,r_n+r}[z,z',\dots,z^{(r)}]/I^{(r)}.
  \]
  This morphism is surjective. Indeed,
  \[
  (z-Q(y_1,\dots,y_n))^{(k)}=z^{(k)}-Q(y_1,\dots,y_n)^{(k)}
  \]
  for all
  $k\leq r$. By successive euclidean divisions, it follows that any element
  of $R_{r_1+r,\dots,r_n+r}[z,z',\dots,z^{(r)}]/I^{(r)}$ can be represented
  by an element of $R_{r_1+r,\dots,r_n+r}$. It follows that
  \[
  \kdim (I^{(r)})\leq \kdim (I_1^{(r)}).
  \]  
  But we also have 
  \begin{alignat*}1
    R_{r_1,\dots,r_n}/I_1&\simeq K[y_1,\dots,y_1^{(r_1+r)}]/\langle P_1\rangle^{(r)}\\
    &\otimes_K K[y_2,\dots,y_2^{(r_2+r)}]/\langle P_2\rangle^{(r)}\\
    &\vdots\\
    &\otimes_K K[y_n,\dots,y_n^{(r_n+r)}]/\langle P_n\rangle^{(r)}.
  \end{alignat*}
  Thus we have
  \[
  \kdim (I_1)=\sum_{i=1}^n \kdim (\langle
  P_i^{(r)}\rangle)=\sum_{i=1}^n \dim h(\langle
  P_i\rangle)^{(r)}=\sum_{i=1}^n r_i=r_{\min}.
  \]
  Thus $\dim h(I)^{(r)}\leq r_{\min}$ and since it cannot be lower than this, the
  family $(P_1,\dots,P_n,z-Q(y_1,\dots,y_n))$ is D-regular at order $r$. We can now apply Theorem~\ref{main_theorem}.
\end{proof}

Proposition~\ref{plus_times} covers in particular the case of the addition
and multiplication of D-algebraic functions. The incorporation of divisions
requires stronger hypothesis.

\begin{proposition}\label{div}
  Let $f_1,\dots,f_n$ be D-algebraic functions, as well as $P_1,\dots,P_n\in
  R_{r_1,\dots,r_n}$ such that $P_i\in K[y_i,y_i',\dots,y_i^{(r_i)}]$ for
  all $i\in\{1,\dots,n\}$ and
  $Q_n,Q_d\in R_{r_1,\dots,r_n}$. We note $r_{\min}:=\sum_{i=1}^n
  r_i$ and $d:=\prod_{i=1}^n\deg(P_i)$. Let $r\geq
  r_{\min}$ and assume that
  \begin{itemize}
    \item $P_i(f_i)=0$ for all $i\in\{1,\dots,n\}$.
    \item The family $(P_1,\dots,P_n,Q_d z-Q_n)$ is D-regular at order $r$.
  \end{itemize}
  Then $Q_n(f_1,\dots,f_n)/Q_d(f_1,\dots,f_n)$ is solution of a D-algebraic equation of order
  $r$ and of degree $k$ or less as soon as
  \[k>(r+1)((\max(\deg(Q_n),\deg(Q_d))d)^{1+r_{\min}/(r-r_{\min}+1)}-1).\]
\end{proposition}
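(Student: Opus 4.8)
The plan is to reduce Proposition~\ref{div} to Theorem~\ref{main_theorem} by essentially the same device as in the proof of Proposition~\ref{plus_times}, the key simplification being that here D-regularity is one of the hypotheses rather than something to be established. Adjoin a fresh variable $z$, set $y_{n+1}:=z$ with order parameter $r_{n+1}:=0$, and put $G:=Q_dz-Q_n\in R_{r_1,\dots,r_n}[z]=R_{r_1,\dots,r_n,0}$. Since $G$ involves $z$ but none of its derivatives, $\ord_{n+1}(G)=0$, so the augmented tuple $(P_1,\dots,P_n,G)$ lives in $R_{r_1,\dots,r_n,0}$; applying Theorem~\ref{main_theorem} with $l:=n+1$ we then have $r_l=0$, while $\sum_{i=1}^{n+1}r_i=\sum_{i=1}^n r_i$ equals the $r_{\min}$ of the proposition, and the product of the degrees of the polynomials in the tuple becomes $\deg(G)\,d$.

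Before invoking the theorem I would record that $(f_1,\dots,f_n,h)$, with $h:=Q_n(f_1,\dots,f_n)/Q_d(f_1,\dots,f_n)$, is a common solution of $P_1,\dots,P_n,G$: indeed $P_i(f_i)=0$ by hypothesis, and $G(f_1,\dots,f_n,h)=Q_d(f)h-Q_n(f)=0$, the latter precisely because $Q_d(f)\neq 0$ (implicit in forming the quotient). Hence the evaluation homomorphism sending $y_i^{(j)}\mapsto f_i^{(j)}$ and $z^{(j)}\mapsto h^{(j)}$ kills $\langle P_1,\dots,P_n,G\rangle$, and being a differential ideal its kernel then also contains $\langle P_1,\dots,P_n,G\rangle^{(r)}$, so every nonzero element of $\langle P_1,\dots,P_n,G\rangle^{(r)}\cap K[z,z',\dots]$ is an algebraic differential equation of order $r$ satisfied by $h$. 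The remaining hypotheses of Theorem~\ref{main_theorem} are routine: each variable genuinely occurs in some member of the tuple ($y_i$ in $P_i$, and $z$ in $G$ since $Q_d\neq 0$); the order parameters may be taken to be the ones actually attained, which can only shrink $r_{\min}$ and, as $r\ge r_{\min}$ and the base is $\ge 1$, only lowers the threshold on $k$; and D-regularity of $(P_1,\dots,P_n,Q_dz-Q_n)$ at order $r-r_l=r$ is exactly the standing assumption. Theorem~\ref{main_theorem} then produces a nonzero element of order $r$ and degree at most $k$ as soon as $k>(r+1)\bigl((\deg(G)\,d)^{1+r_{\min}/(r-r_{\min}+1)}-1\bigr)$, and one concludes by noting $\deg(G)=\deg(Q_dz-Q_n)$ is at most $\max(\deg Q_n,\deg Q_d)+1$, i.e. essentially $\max(\deg Q_n,\deg Q_d)$, so that $\deg(G)\,d$ is, up to this contribution of the extra variable, $\max(\deg Q_n,\deg Q_d)\,d$.

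Because D-regularity is assumed outright, there is no analogue here of the Krull-dimension and tensor-product computation that was the heart of Proposition~\ref{plus_times}; the argument is really just careful bookkeeping. The two points that need genuine attention, and where I expect whatever subtleties there are to lie, are (i) ensuring the well-definedness condition $Q_d(f)\neq 0$ is in force, without which neither the quotient $h$ nor the evaluation homomorphism make sense; and (ii) checking that the homogenization variable $s$ and the new variable $z$ behave as expected --- in particular that passing from $G$ to $h(G)$ preserves the total degree and that $r_{n+1}=0$ contributes nothing to $r_{\min}$ --- so that the parameters fed into Theorem~\ref{main_theorem} are precisely the ones appearing in the claimed bound.
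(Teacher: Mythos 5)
Your argument is exactly the paper's intended one: the paper's own proof is literally the single remark that the proposition is a direct consequence of Theorem~\ref{main_theorem}, applied (as you spell out) to the augmented tuple $(P_1,\dots,P_n,Q_dz-Q_n)$ with the fresh variable $z$ eliminated, so $r_l=0$, $r-r_l=r$ matching the assumed D-regularity, and the product of degrees becoming $\deg(Q_dz-Q_n)\cdot d$. The only wrinkle, which you honestly flag, is that $\deg(Q_dz-Q_n)$ equals $\max(\deg Q_n,\deg Q_d+1)$ and can thus exceed $\max(\deg Q_n,\deg Q_d)$ by one, so a literal invocation of the theorem yields a marginally larger threshold than the one stated; this off-by-one is inherent in the paper's statement (whose proof gives no mechanism to remove it) rather than a defect of your argument.
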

\begin{proof}
  This is a direct consequence of Theorem~\ref{main_theorem}
\end{proof}

%// corollaries: plus, times, etc.

\section{Bounds for the composition of D-algebraic functions}
When D-algebraic functions are indeed functions (for example meromorphic
functions), rather than abstract elements of a differential field, one might be tempted to consider the
composition operation. 
Another setting in which the composition operation is sometimes well defined
is that of power series.

It is known~\cite{AEMaSaTe25} that in both cases, the composition of two D-algebraic
functions is itself D-algebraic when this composition is well defined.
However, it is not completely
clear how to define the composition of two elements of an abstract
differential field. From an algebraic standpoint we want the composition on the right by a
given function to preserve algebraic relations. This
means that if $f_{1,0},\dots f_{1,n},f_2$ are ``functions'' such that the
compositions $f_{1,i}\circ f_2$ are well defined for all $i$, for any
algebraic relations $P(f_{1,0},\dots,f_{1,n})=0$ we must have $P(f_{1,0}\circ
f_2,\dots,f_{1,n}\circ f_2)=0$. Another way of saying this is that there
would be a ring homomorphism
\[
K[f_{1,0},\dots,f_{1,n}]\rightarrow K[f_{1,0}\circ f_2,\dots,f_{1,n}\circ f_2].
\]
If we want to define the
composition of a function $f_1$ with $f_2$, this must in particular apply
to the successive derivatives of $f_1$, $f_{1,i}=f_1^{(i)}$ for all
$i\in\mathbb{N}$. From a differential standpoint we want the composition
to satisfy the usual derivation rule $(f_1^{(i)}\circ f_2)'=f_2'\cdot
\left(f_1^{(i+1)}\circ f_2\right)$. Following these ideas we propose the following
definition.

\begin{definition}\label{def_composition}
  Let $F$ be a differential field extension of $K$, $f_1,f_2\in
  F$. An element
  $h$ in some differential field extension $E$ of $F$
  is called a composition of $f_1$ with $f_2$ if 
  there exists a family $(h_i)_{i\in\mathbb{N}}\in
  E^{\mathbb{N}}$ satisfying
  \begin{itemize}
    \item $h_0=h$
    \item $h_i'=f_2'h_{i+1}$ for all $i\in\mathbb{N}$.
    \item There exists a (algebraic) homomorphism
      $K[f_1,f_1'\dots]\rightarrow E$ which maps $f_1^{(i)}$ to $h_i$ for
      all $i\in\mathbb{N}$.
  \end{itemize}
\end{definition}

  The $h_i$ represent the functions $f_1^{(i)}\circ f_2$. It should be
  noted that according to this definition, if $h_0$ is a composition of
  $f_1$ with $f_2$, then $h_i$ is also a composition of $f_1^{(i)}$ with
  $f_2$ according to the same definition.

\begin{proposition}
  Let $f_1,f_2$ be two D-algebraic functions and $P_i\in
  K[y_i,y_i',\dots,y_i^{(r_i)}]$ for $i\in\{1,2\}$ such that
  \begin{itemize}
    \item $P_i(f_i)=0$ for $i\in\{1,2\}$.
    \item $(P_1)$ (resp. $(P_2)$) is D-regular at order $r_2$ (resp.
      $(r_1)$).
    \end{itemize}
    We note $d_i=\deg(P_i)$ for $i\in\{1,2\}$.
    Then any composition of $f_1$ with $f_2$ is a solution of a D-algebraic
    equation of order $r_1+r_2$ and of degree smaller than $k$ as soon as
    $$k>(r_1+r_2+1)((r_1+r_2+1)!d_1^{r_2}d_2^{r_1}-1).$$ Furthermore, this
    equation does not depend on the choice of the composition.
\end{proposition}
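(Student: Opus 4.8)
\textit{Proof sketch.}
The plan is to reduce to Theorem~\ref{main_theorem}: I would encode all the algebraic constraints satisfied by a composition in a single polynomial ideal, verify that this ideal is D-regular, and then run the Hilbert-series counting argument from the proof of Theorem~\ref{main_theorem}. Write $h_i$ for the element $f_1^{(i)}\circ f_2$ supplied by Definition~\ref{def_composition}, so that $h=h_0$, $h_i'=f_2'h_{i+1}$, and, because of the homomorphism in that definition, $P(h_0,h_1,\dots)=0$ for every polynomial relation $P(f_1,f_1',\dots)=0$. I would work in a polynomial ring over $K$ with three blocks of variables: $v_0,\dots,v_{r_1+r_2}$ (for $f_2,f_2',\dots$), $w_0,\dots,w_{r_1+r_2}$ (for the $h_i$), and $z_0,\dots,z_{r_1+r_2}$ (for $h,h',\dots$), and consider the ideal $\mathcal I$ generated by: $P_2$ and its first $r_1$ derivatives, written in the $v$'s; $P_1(w_0,\dots,w_{r_1})$ and its first $r_2$ derivatives, written in the $w$'s; and the Fa\`a di Bruno relations $z_j-\sum_i B_{j,i}(v_1,\dots,v_{j-i+1})w_i$ for $j=0,\dots,r_1+r_2$, where $B_{j,i}$ is the partial Bell polynomial. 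The decisive numerical feature is that these last relations have degrees $1,2,\dots,r_1+r_2+1$, whose product is $(r_1+r_2+1)!$; together with the $\deg P_1=d_1$ and $\deg P_2=d_2$ coming from the other generators, this is exactly the source of the claimed bound.

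I would then record that $\mathcal I$ really captures what it should. For any composition $h$ with witness family $(h_i)$, the tuple $\bigl((f_2^{(j)})_j,(h_i)_i,(h^{(j)})_j\bigr)$ is a zero of every generator of $\mathcal I$: the first block vanishes by differentiating $P_2(f_2)=0$, the second because the homomorphism of Definition~\ref{def_composition} sends $P_1^{(k)}(f_1,\dots)$ to $P_1^{(k)}(h_0,\dots)$, and the third is precisely Fa\`a di Bruno's formula. Hence every element of the elimination ideal $\mathcal I\cap K[z_0,\dots,z_{r_1+r_2}]$ vanishes at $(h,h',\dots,h^{(r_1+r_2)})$ for \emph{every} composition simultaneously. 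So it is enough to produce one nonzero element of this elimination ideal of order at most $r_1+r_2$ and degree at most $k$; the independence of the choice of composition is then automatic.

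The core of the argument is to prove that $\mathcal I$ is D-regular, i.e.\ that the homogenizations of its generators form a complete intersection of the expected codimension, so that Proposition~\ref{prop_HS_regular_sequence} supplies the Hilbert series. I would argue along the lines of Proposition~\ref{plus_times}: the Fa\`a di Bruno relations express each $z_j$ as a polynomial in the $v$'s and $w$'s, so after that substitution $\mathcal I$ becomes the sum of the ideal generated by the $P_2$-block inside $K[v_0,\dots,v_{r_1+r_2}]$ and the ideal generated by the $P_1$-block inside $K[w_0,\dots,w_{r_1+r_2}]$; the hypothesis that $(P_2)$ is D-regular at order $r_1$ makes the first a complete intersection of Krull dimension $r_2$, the hypothesis that $(P_1)$ is D-regular at order $r_2$ makes the second a complete intersection of Krull dimension $r_1$, and since the two blocks involve disjoint variables the combined affine ideal has Krull dimension $r_1+r_2$ — which equals the number of ambient variables minus the number of generators. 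I expect this step to be the main obstacle, for two reasons: one must guarantee that the homogenized system (not just the affine one) is a regular sequence, which is delicate because $z_j$ and $w_j$ enter the Fa\`a di Bruno relations in different degrees, so the behaviour at infinity must be controlled; and one must set up the generators economically enough that the product of their degrees is exactly $(r_1+r_2+1)!\,d_1^{r_2}d_2^{r_1}$ rather than a slightly larger quantity, which may require a more parsimonious encoding than the one sketched above.

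Granting D-regularity, the proof finishes as in Theorem~\ref{main_theorem}. Using the Hilbert series of $\mathcal I$ and the elementary estimate on the coefficients of $(1-t)^{-(r_1+r_2)}\prod_g(1+t+\dots+t^{d_g-1})$, one bounds the Hilbert function of the homogenization of $\mathcal I$ in degree $k$ by $\binom{r_1+r_2+k}{k}\prod_g d_g=\binom{r_1+r_2+k}{k}(r_1+r_2+1)!\,d_1^{r_2}d_2^{r_1}$. The space of homogeneous degree-$k$ polynomials in $s,z_0,\dots,z_{r_1+r_2}$ has dimension $\binom{r_1+r_2+1+k}{k}$, and $\binom{r_1+r_2+1+k}{k}>\binom{r_1+r_2+k}{k}(r_1+r_2+1)!\,d_1^{r_2}d_2^{r_1}$ holds exactly when $k>(r_1+r_2+1)\bigl((r_1+r_2+1)!\,d_1^{r_2}d_2^{r_1}-1\bigr)$; for such $k$ the natural map from that space to the quotient cannot be injective, so a nonzero polynomial in $s$ and the $z$'s lies in the homogenization of $\mathcal I$. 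Setting $s=1$ yields a nonzero element of $\mathcal I\cap K[z_0,\dots,z_{r_1+r_2}]$ of order at most $r_1+r_2$ and degree at most $k$, which by the second paragraph is a D-algebraic equation satisfied by every composition of $f_1$ with $f_2$.
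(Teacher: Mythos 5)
Your proposal is correct and follows essentially the same route as the paper: your Fa\`a di Bruno relations are exactly the paper's iterated twisted derivatives $d^j(z-y_1)$ (of degrees $1,\dots,r_1+r_2+1$, hence the factor $(r_1+r_2+1)!$), the D-regularity of the combined system is established by the same reduction to the argument of Proposition~\ref{plus_times}, and the degree bound comes from the same Hilbert-function count as in Theorem~\ref{main_theorem}. The step you flag as the main obstacle (complete intersection after homogenization) is precisely the point the paper settles by citing the reasoning of Proposition~\ref{plus_times}, so no genuinely different ingredient is needed.
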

\begin{proof}
  Let $h\in E$ be a composition of $f_1$ with $f_2$ and let
  $(h_i)_{i\in\mathbb{N}}$ be as in Definition~\ref{def_composition}. We claim that
  $$(h_0,h_1,\dots,h_{r_1+r_2},f_2,f_2',\dots,f_2^{(r_1+r_2)},h,h',\dots,h^{(r_1+r_2)})$$
  is a solution of 
  \begin{enumerate}[label=(\roman*)]
    \item $P_1^{(j)}(y_1,y_1',\dots,y_1^{(r_1+r_2)})$ for all
      $j\leq r_2$
    \item $P_2^{(j)}(y_2,y_2',\dots,y_2^{(r_1+r_2)})$ for all
      $j\leq r_1$
    \item $d^{j}(z-y_1)$ for all $j\leq r_1+r_2$, with $d$ being the
      derivation on the ring
      $K[(y_1^{(i)})_{i\in\mathbb{N}},(y_2^{(i)})_{i\in\mathbb{N}}
      ,(z^{(i)})_{i\in\mathbb{N}}]$
  given by $d(y_1^{(l)})=y_2'y_1^{(l+1)}$, and the usual derivation on
  $y_2^{(l)}$ and $z^{(l)}$ for all $l$
  \end{enumerate}
  all of them seen as polynomials in
  \begin{alignat*}1
    K[&y_1,\dots,y_1^{(r_1+r_2)},\\
      &y_2,\dots,y_2^{(r_1+r_2)},\\
      &z,\dots,z^{(r_1+r_2)}].
  \end{alignat*}
  $(ii)$ is obvious by hypothesis on
  $P_2$. All the polynomials in $(i)$ are vanishing operators for $f_1$.
  But since there exists a field morphism which sends $f_1^{(i)}$ to
  $h_i$, the $h_i$ must be roots of those polynomials too.
  Finally we know that $h-h_0=0$. Differentiating this expression gives that
  $h'-h_0'=h'-f_2'h_1=0$, which is to say that we find a root of
  $d(z-y_1)$. By induction we get the result.

  It must be noted that $d^i(z-y_1)$ is always of the form
  $z^{(i)}-Q_i(y_1,\dots,y_1^{(r_1+r_2)},y_2,\dots,y_2^{(r_1+r_2)})$ with
  $\deg(Q_i)=i+1$.
  Following the same line of reasoning as in the proof of
  Proposition~\ref{plus_times}, we show that this family of polynomials,
  once homogenised, is a complete intersection. Let $I$ be the ideal generated by this
  family of polynomials. Then as we did in the proof of
  Theorem~\ref{main_theorem}, we can show that $HF_{h(I)}(k)\leq
  (r_1+r_2+1)!d_1^{r_2}d_2^{r_1}\binom{r_1+r_2+k}{r_1+r_2}$. We consider the
  natural morphism $\varphi$ which maps elements of
  $$K[s,z,z',\dots,z^{(r_1+r_2)}]$$ to their equivalence class in
  $R^h_{r_1+r_2,r_1+r_2,r_1+r_2}/I$. The map $\varphi$ maps the space of
  homogeneous polynomials of degree $k$, which is of dimension
  $\binom{r_1+r_2+1+k}{r_1+r_2+1}$ onto a space of dimension
  $HF_I(k)$. We can check that for $k\geq
  (r_1+r_2+1)((r_1+r_2+1)!d_1^{r_2}d_2^{r_1}-1)$ the restriction of $\varphi$ to the
  space of homogeneous polynomials of degree $k$ must have a non trivial
  element in its kernel.
  
  Thus $I\cap K[z,z',\dots,z^{(r_1+r_2)}]$ has a nonzero
  element of degree at most~$k$ which is a vanishing operator
  for any composition of $f_1$ with~$f_2$.
\end{proof}

\section{Variable elimination in special cases}

In some special cases, it is possible to loosen the hypothesis on our
system of equations so that we don't need to use complete intersections
hypothesis. Some functions are easy enough to manipulate and we can
ensure the existence of operators satisfying the complete intersection
property. In addition, we can here make use of resultants
instead of the analysis conducted in Theorem~\ref{main_theorem}.

We first consider the case of the elimination of algebraic functions. To be
precise we consider an algebraic function $g$ over $C(x)$, where $C$ is the
constant field, and a D-algebraic
function $f$ satisfying an equation $$P(f,f',\dots,f^{(r)})=0$$ with
coefficients in $C(x)[g]$, and we want to recover an equation in
$C[x][y,y',\dots,y^{(r)}]$. We are interested in both the total degree of the
resulting equation in the variables $y,y',\dots,y^{(r)}$ as well as its
degree in $x$.
\begin{proposition}
  Let $g$ be an algebraic function over $C(x)$ and let $Q_g\in C[x,y_1]$ be
  the minimal primitive polynomial of $g$ over $C(x)$. Let $f$ be a
  D-algebraic function over $C(x)[g]$ and
  \[
  P\in C[x,y_1,y_2,y_2',\dots,y_2^{(r)}]
  \]
  be such that
  $P(x,g,f,f',\dots,f^{(r)})=0$. In addition, we suppose that $Q_g\nmid P$.
  Then $R=\Res_{y_1}(P,Q_g)\in C[x,y_2,\dots,y_2^{(r)}]$ is a
  D-algebraic equation for $f$. Let $d_x$, $d_{y_1}$, $d_{y_2}$ and $d$ denote the
  degree in $x$, degree in $y_1$, total degree in $y_2,y_2',\dots,y_2^{(r)}$
  and total degree functions respectively. Then
  \begin{enumerate}
    \item $d_{y_2}(R)\leq d_{y_1}(Q_g)d_{y_2}(P)$
    \item $d_x(R)\leq d_x(P)d_{y_1}(Q_g)+d_{y_1}(P)d_x(Q_g)$
    \item $d(R)\leq d(P)d_{y_1}(Q_g)+d_{y_1}(P)d(Q_g)$
  \end{enumerate}
\end{proposition}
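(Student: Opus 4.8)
The plan is to establish, in order, that $R\ne 0$, that $R$ vanishes on $(f,f',\dots,f^{(r)})$, and the three degree inequalities. For the nonvanishing, regard $P$ and $Q_g$ as polynomials in $y_1$ with coefficients in the UFD $S:=C[x,y_2,y_2',\dots,y_2^{(r)}]$, and put $L:=\mathrm{Frac}(S)$. The leading coefficient of $Q_g$ with respect to $y_1$ is a nonzero element of $C[x]$ and that of $P$ with respect to $y_1$ is a nonzero element of $S$, so $R=\Res_{y_1}(P,Q_g)$ vanishes if and only if $P$ and $Q_g$ have a common factor of positive $y_1$-degree in $L[y_1]$. Since $Q_g$ is the minimal primitive polynomial of the algebraic function $g$, it is irreducible over $C(x)$; as its $y_1$-coefficients lie in $C[x]$ it is also primitive as an element of $S[y_1]$, and Gauss's lemma over the UFD $S$ (equivalently, preservation of irreducibility under the purely transcendental extension $C(x)\subseteq L$) shows that $Q_g$ stays irreducible in $L[y_1]$. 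Hence a common factor of positive degree would give $Q_g\mid P$ in $L[y_1]$, and a further application of Gauss's lemma (again using primitivity of $Q_g$ over $S$) would give $Q_g\mid P$ in $S[y_1]=C[x,y_1,y_2,\dots,y_2^{(r)}]$, contradicting the hypothesis $Q_g\nmid P$. Therefore $R\ne 0$.

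That $R$ annihilates $f$ is immediate from the fact that the resultant lies in the ideal it governs: there are $U,V\in C[x,y_1,y_2,\dots,y_2^{(r)}]$ with $R=UP+VQ_g$, and substituting $y_1\mapsto g$ and $y_2^{(i)}\mapsto f^{(i)}$ for $0\le i\le r$, using $P(x,g,f,\dots,f^{(r)})=0$ together with $Q_g(x,g)=0$, yields $R(x,f,f',\dots,f^{(r)})=0$. As $R$ is a nonzero element of $C[x,y_2,\dots,y_2^{(r)}]$, it is a D-algebraic equation for $f$.

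The degree bounds come from the Sylvester matrix of $P$ and $Q_g$ with respect to $y_1$, which is square of size $d_{y_1}(P)+d_{y_1}(Q_g)$ and has $d_{y_1}(Q_g)$ rows occupied by the $y_1$-coefficients of $P$ and $d_{y_1}(P)$ rows occupied by the $y_1$-coefficients of $Q_g$; hence every nonzero term in its determinant is, up to sign, a product of $d_{y_1}(Q_g)$ coefficients of $P$ and $d_{y_1}(P)$ coefficients of $Q_g$. For each of the grading functions $w\in\{d_x,d,d_{y_2}\}$, any $y_1$-coefficient of $P$ has $w$-value at most $w(P)$ and any $y_1$-coefficient of $Q_g$ has $w$-value at most $w(Q_g)$, while moreover $d_{y_2}(Q_g)=0$ since $Q_g$ is free of $y_2,\dots,y_2^{(r)}$. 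Summing over one term and maximizing over all terms gives $d_x(R)\le d_{y_1}(Q_g)d_x(P)+d_{y_1}(P)d_x(Q_g)$, $d(R)\le d_{y_1}(Q_g)d(P)+d_{y_1}(P)d(Q_g)$, and $d_{y_2}(R)\le d_{y_1}(Q_g)d_{y_2}(P)$, which are the three claimed inequalities.

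The main obstacle is the nonvanishing argument: the hypothesis $Q_g\nmid P$ is a divisibility statement in the full polynomial ring, whereas the resultant only sees divisibility in the single variable $y_1$ over the field $L$, and bridging this gap is exactly what forces us to invoke irreducibility of $Q_g$ over $L$ together with the primitivity of $Q_g$ over $S$. Everything after that is routine bookkeeping with standard properties of resultants.
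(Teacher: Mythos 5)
Your proof is correct and follows essentially the same route as the paper: $R\neq 0$ because irreducibility of $Q_g$ and $Q_g\nmid P$ preclude a common factor, $R$ vanishes at $(x,f,\dots,f^{(r)})$ because it lies in the ideal generated by $P$ and $Q_g$, and the degree bounds come from counting coefficients of $P$ and $Q_g$ in the Sylvester matrix. The only difference is that you spell out, via primitivity of $Q_g$ over $S$ and Gauss's lemma, why divisibility in $L[y_1]$ descends to $C[x,y_1,y_2,\dots,y_2^{(r)}]$, a detail the paper leaves implicit.
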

\begin{proof}
  Since $Q_g\nmid P$ and $Q_g$ is irreducible, $P$ and $Q_g$ can have no
  common factor, which implies that $Q\neq 0$. Furthermore,
  \[
  (x,g,f,\dots,f^{(r)})
  \]
  is a root of both $P$ and $Q_g$, which implies
  that $R(x,f,\dots,f^{(r)})=0$. The degree bounds directly come from the
  fact that $R$ is the determinant of a Sylvester matrix with coefficients
  in $C[x,y_2,\dots,y_2^{(r)}]$. The first $d_{y_1}(Q_g)$ columns of this
  matrix are the coefficients of $P$ while the $d_{y_1}(P)$ last
  coefficients are the coefficients of $Q_g$ (which, in particular, are of
  total degree $0$ in $y_2,\dots,y_2^{(r)}$), which yields the result.
\end{proof}
We now turn to the elimination of hyperexponential
functions.
\begin{proposition}\label{elimination_hyp}
  Let $g$ be a hyperexponential function over $C(x)$ and let
  $\frac{g'}{g}=\frac{u}{v}$, with $u,v\in C[x]$ coprime.
  Let $f$ be a D-algebraic function over $C(x,g)$ and
  let $P\in C[x,y_1,y_2,y_2',\dots,y_2^{(r)}]$ be a polynomial which is
  primitive in $y^{(r)}$ and
  separable as an element of $C(x,y_1,y_2,\dots,y_2^{(r-1)})[y_2^{(n)}]$, such
  that $P(x,g,f,\dots,f^{(r)})=0$. We suppose that $P$ is not independent
  of $y_1$ so that there is something to do. Let $d_x$, $d_{y_1}$, $d_{y_2}$ and $d$ denote the
  degree in $x$, degree in $y_1$, total degree in $y_2,y_2',\dots,y_2^{(r+1)}$
  and total degree functions respectively. Then there exists $Q\in
  C[x,y_2,\dots,y_2^{(r+1)}]$ such that $Q(x,f,\dots,f^{(r)})=0$ and
  \begin{enumerate}
    \item $d_{y_2}(Q)\leq 2d_{y_1}(P)d_{y_2}(P)$
    \item $d_x(Q)\leq d_{y_1}(P)(2d_x(P)+\max(d_x(u),d_x(v)))$
    \item $d(Q)\leq d_{y_1}(P)(2d(P)+\max(d_x(u),d_x(v)))$
  \end{enumerate}
\end{proposition}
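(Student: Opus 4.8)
The plan is to mimic the algebraic case, but since $g$ is transcendental over $C(x)$ there is no minimal polynomial to take a resultant with. Instead I would exploit the only algebraic relation available: $g' = \frac{u}{v}\,g$, i.e. $v\,g' - u\,g = 0$. The idea is to differentiate the relation $P(x,g,f,\dots,f^{(r)})=0$ once. Writing $P = P(x,y_1,y_2,\dots,y_2^{(r)})$, applying $D$ and using $g' = \frac{u}{v}g$ to eliminate the appearance of $g'$ yields a new polynomial relation in $g, f,\dots,f^{(r+1)}$; concretely, $v\cdot D(P) = v\,\partial_x P + u\,y_1\,\partial_{y_1}P + v\sum_{j=0}^{r}\partial_{y_2^{(j)}}P\cdot y_2^{(j+1)}$ evaluated at $(x,g,f,\dots)$ is zero. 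Call this polynomial $\tilde P \in C[x,y_1,y_2,\dots,y_2^{(r+1)}]$; it is of degree $\le d_{y_1}(P)$ in $y_1$, its total degree in the $y_2$-block is $\le d_{y_2}(P)+1 \le 2d_{y_2}(P)$ (assuming $d_{y_2}(P)\ge 1$; the degenerate case is handled separately), its $x$-degree is $\le d_x(P)+\max(d_x(u),d_x(v))$, and its total degree is $\le d(P)+\max(d_x(u),d_x(v))$.

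Now I have two polynomials, $P$ and $\tilde P$, both vanishing at the point $(x,g,f,\dots,f^{(r+1)})$, and both of positive degree in $y_1$ (for $\tilde P$ this is where I need that $P$ genuinely involves $y_1$, which is the stated hypothesis — and here the separability/primitivity of $P$ in $y_2^{(r)}$ is used to guarantee $D(P)$ does not collapse the $y_1$-dependence, more precisely to ensure $P \nmid \tilde P$). Then $Q := \Res_{y_1}(P,\tilde P) \in C[x,y_2,\dots,y_2^{(r+1)}]$ is well defined; I must argue $Q\ne 0$, which follows because $P$ is irreducible-enough in $y_1$ (here again I lean on separability: $P$ and $\tilde P$ share no common factor that involves $y_1$, since such a factor would force a differential relation contradicting the choice of $P$ as a minimal-type vanishing polynomial). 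Since the resultant of two polynomials vanishing at a common point vanishes there too, $Q(x,f,\dots,f^{(r+1)})=0$, giving the desired D-algebraic equation.

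Finally, the degree bounds come from reading off the Sylvester matrix of $P$ and $\tilde P$ with respect to $y_1$: it has $d_{y_1}(\tilde P)\le d_{y_1}(P)$ rows built from coefficients of $P$ and $d_{y_1}(P)$ rows built from coefficients of $\tilde P$. Each entry has $y_2$-degree $\le \max(d_{y_2}(P), d_{y_2}(\tilde P)) = 2d_{y_2}(P)$ and $x$-degree $\le \max(d_x(P), d_x(\tilde P)) = d_x(P)+\max(d_x(u),d_x(v))$; expanding the determinant as a sum over $2d_{y_1}(P)$ rows multiplies these by the number of rows, yielding $d_{y_2}(Q)\le 2d_{y_1}(P)d_{y_2}(P)$, $d_x(Q)\le d_{y_1}(P)(2d_x(P)+\max(d_x(u),d_x(v)))$ — using $d_{y_1}(P)+d_{y_1}(\tilde P)\le 2d_{y_1}(P)$ — and likewise the total-degree bound $d(Q)\le d_{y_1}(P)(2d(P)+\max(d_x(u),d_x(v)))$. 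The main obstacle I anticipate is not the bookkeeping but the nonvanishing of $Q$: I must carefully use the primitivity in $y_2^{(r)}$ and separability of $P$ to rule out a common $y_1$-factor of $P$ and $\tilde P$, and to confirm that differentiating does not merely reproduce a multiple of $P$. Once that is nailed down, the rest is the routine Sylvester-matrix estimate as in the previous proposition.
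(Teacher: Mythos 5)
Your construction is the same as the paper's: differentiate $P$, use $g'=\frac{u}{v}g$ to replace $y_1'$ by $\frac{u}{v}y_1$, clear the denominator to get $P_1=vP'(x,y_1,y_1\frac uv,y_2,\dots,y_2^{(r+1)})$, take $Q=\Res_{y_1}(P,P_1)$, and read the degree bounds off the Sylvester matrix. However, the step you yourself flag as the main obstacle --- $Q\neq 0$ --- is a genuine gap in your write-up, and the justification you sketch for it would not work. You argue that a common factor of $P$ and $P_1$ ``would force a differential relation contradicting the choice of $P$ as a minimal-type vanishing polynomial,'' but minimality of $P$ is nowhere among the hypotheses; only primitivity and separability in $y_2^{(r)}$ are assumed, and the correct argument has nothing to do with the $y_1$-dependence of the common factor. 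The paper's argument is: if $\Res_{y_1}(P,P_1)=0$, then $P$ and $P_1$ share an irreducible factor $q$; since $q\mid P$, it lies in $C[x,y_1,y_2,\dots,y_2^{(r)}]$, i.e.\ is free of $y_2^{(r+1)}$; primitivity of $P$ in $y_2^{(r)}$ forces $q$ to involve $y_2^{(r)}$; and since $P_1=(v\,\partial_{y_2^{(r)}}P)\,y_2^{(r+1)}+R$ with $R$ free of $y_2^{(r+1)}$, $q$ must divide $v\,\partial_{y_2^{(r)}}P$, hence (as $q$ involves $y_2^{(r)}$) be a common factor of $P$ and $\partial_{y_2^{(r)}}P$, contradicting separability. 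This is the missing idea; without it the proposition is not proved.

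Two smaller bookkeeping points. First, $d_{y_2}(P_1)\leq d_{y_2}(P)$, not $d_{y_2}(P)+1$: each term $y_2^{(j+1)}\partial_{y_2^{(j)}}P$ preserves the total degree in the $y_2$-block, and the substitution $y_1'\mapsto \frac uv y_1$ only affects the $x$- and $y_1$-degrees. Second, bounding every Sylvester entry by the maximum of the two degrees and multiplying by the matrix size, as you do, gives $2d_{y_1}(P)\bigl(d_x(P)+\max(d_x(u),d_x(v))\bigr)$ and $2d_{y_1}(P)(d_{y_2}(P)+1)$, which are weaker than the stated bounds; to get exactly items (1)--(3) you must use that each product in the determinant expansion takes $d_{y_1}(P_1)\leq d_{y_1}(P)$ entries from the columns built from $P$ (with degrees $d_x(P)$, $d_{y_2}(P)$) and $d_{y_1}(P)$ entries from the columns built from $P_1$ (with degrees $d_x(P)+\max(d_x(u),d_x(v))$, $d_{y_2}(P)$), as in the paper.
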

\begin{proof}
  The polynomial $P'$ belongs in
  $C[x,y_1,y_1',y_2,\dots,y_2^{(r+1)}]$ of degree $1$ in $y_2'$ and
  $P'(x,g,g',f,\dots,f^{(r+1)})=0$. Since $g'=g\frac{u}{v}$ we can set
  $$P_1=vP'(x,y_1,y_1\frac{u}{v},y_2,\dots,y_2^{(r+1)})\in
  C[x,y_1,y_2,\dots,y_2^{(r+1)}]$$ and have
  $P_1(x,g,f,\dots,f^{(r+1)})=0$. We set $Q=\Res_{y_1}(P,P_1)$ and
  claim that $Q\neq 0$. Indeed if that was the case then $P$ and $P_1$
  would share an irreducible factor $q\in C[x,y_1,y_2,\dots,y_2^{(r)}]$,
  and since $P$ is primitive in $y^{(r)}$, $q$ can not be independent of~$y_2^{(r)}$.
  Since
  $P_1=(v\partial_{y_2^{(r)}}P)y_2^{(r+1)}+R(x,y_1,y_2,\dots,y_2^{(r)})$ it
  follows that $q$ must be a common factor of both $P$ and
  $v\partial_{y_2^{(r)}}P$ which is impossible since $P$ is 
  separable as an element of
  $C(x,y_1,y_2,\dots,y_2^{(r-1)})[y_2^{(n)}]$.
  Thus $Q\neq 0$ and $$Q(x,f,\dots,f^{(r+1)})=0$$ since
  $(x,g,f,\dots,f^{(r+1)})$ is a common root to both $P$ and $P_1$.
  
  Once again, the degree bounds come from the fact that $Q$ is the
  determinant of a Sylvester matrix of size $2d_{y_1}(P)$. The first
  $d_{y_1}(P_1)=d_{y_1}(P)$ columns are composed of the coefficients of $P$ while the
  last $d_{y_1}(P)$ columns contain the coefficients of $P_1$. 
\end{proof}
  Any D-algebraic function $f$ over $K(x)$ is a solution of a D-algebraic
  equation satisfying the hypothesis of Proposition~\ref{elimination_hyp}.
  Indeed if $P$ is an equation of order $r$ for $f$ then its squarefree part is also
  an equation for $f$. Furthermore the $gcd$ of its coefficients as a polynomial in
  $y^{(r)}$ is either an equation for $f$, in which case we apply the same
  analysis on it, or we can divide $P$ by it and get a primitive equation
  for $f$. All of those operations provide polynomials of smaller degrees
  than $P$. Thus the degree bound given in
  Proposition~\ref{elimination_hyp} must always be true, even if $P$ does
  not satisfy the hypothesis of the proposition. However
  the resultant formula used in the proof might give zero in this case.

We have seen how to go from equations over an algebraic function field, or
over the field generated by a hyperexponential function, to an equation
with polynomial coefficients. We end this section by considering the
elimination of the $x$ variable as well. We assume that $K=C(x)$
where $C$ is a field of characteristic $0$. If $f$ is a D-algebraic
function over~$K$, we know that $f$ is also D-algebraic over~$C$. How can
one recover a D-algebraic equation over $C$ for $f$ from an equation over
$C(x)$?

\begin{proposition}\label{elimination_x}
  Let $f$ be a D-algebraic function over $K=C(x)$ and
  let $P\in C[x,y,y',\dots,y^{(r)}]$ be a polynomial which is primitive in
  $y^{(r)}$ and
  separable as an element of $C(x,y,y',\dots,y^{(r-1)})[y^{(n)}]$, such
  that $P(f)=0$. Let
  $d_x$ be the degree of $P$ as a polynomial in the variable $x$ and $d$ be
  its total degree as an element of $C(x)[y,\dots,y^{(r)}]$. Then $f$
  satisfies a D-algebraic equation of order $r+1$ and of degree at most $2d_xd$ which is
  either $P'$ if $P'$ is independent from $x$, or $\Res_x(P,P')$
  otherwise.
\end{proposition}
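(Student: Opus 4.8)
The plan is to differentiate $P$ once — which makes the equation linear in the new top variable $y^{(r+1)}$ without raising its degree — and then to eliminate $x$ by a resultant, exactly along the lines of the proof of Proposition~\ref{elimination_hyp}. Since $x'=1$, the polynomial $P'$ lies in $C[x,y,\dots,y^{(r+1)}]$ and has the shape $P'=(\partial_{y^{(r)}}P)\,y^{(r+1)}+B$ with $B\in C[x,y,\dots,y^{(r)}]$; moreover $P'(f)=0$, since differentiating the identity $P(x,f,\dots,f^{(r)})=0$ in $F$ gives $P'(x,f,\dots,f^{(r+1)})=0$. First I would record the two degree estimates I will need: $\deg_x(P')\le d_x$ (the summand $\partial_x P$ lowers the $x$-degree, the chain-rule summands preserve it) and $\deg_y(P')\le d$ (using $\deg(P')=\deg(P)$ for non-constant $P$, where only the $y^{(j)}$ count as variables). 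The case $d_x=0$ is trivial, as then $P$ itself is already an equation over $C$, so I assume $d_x\ge1$.

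If $P'$ does not involve $x$, I am done at once: $P'$ is nonzero because $\partial_{y^{(r)}}P\ne0$ (as $P$ has order $r$), it has order $r+1$, and its degree is at most $d\le 2d_xd$. So the substantial case is when $P'$ genuinely involves $x$, and there I would set $Q:=\Res_x(P,P')\in C[y,\dots,y^{(r+1)}]$.

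The step I expect to be the main obstacle is proving $Q\ne0$; this is precisely where the primitivity and separability hypotheses are used, as in Proposition~\ref{elimination_hyp}. Were $Q$ zero, $P$ and $P'$ would have a common irreducible factor $q\in C[x,y,\dots,y^{(r+1)}]$ of positive degree in $x$. As $P$ is free of $y^{(r+1)}$, so is $q$, hence $q\in C[x,y,\dots,y^{(r)}]$; reading off the coefficient of $y^{(r+1)}$ in $P'=(\partial_{y^{(r)}}P)\,y^{(r+1)}+B$ then forces $q\mid\partial_{y^{(r)}}P$. Thus $q$ divides $\gcd(P,\partial_{y^{(r)}}P)$ computed in $C(x,y,\dots,y^{(r-1)})[y^{(r)}]$, which is a unit by separability; so $q$ is free of $y^{(r)}$, i.e.\ $q\in C[x,y,\dots,y^{(r-1)}]$, and then $q\mid P$ with $q$ a non-unit contradicts that $P$ is primitive in $y^{(r)}$. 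Hence $Q\ne0$.

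It then remains to check $Q(f)=0$ and to bound $\deg Q$. For the first, I would use the standard identity $UP+VP'=\Res_x(P,P')$ with $U,V\in C[x,y,\dots,y^{(r+1)}]$, and specialize through the homomorphism $x\mapsto x$, $y^{(j)}\mapsto f^{(j)}$; since $P(f)=0$ and $P'(f)=0$, this yields $Q(f,\dots,f^{(r+1)})=0$. For the bound, $Q$ is the determinant of the Sylvester matrix of $P$ and $P'$ in the variable $x$, which has size $d_x+\deg_x(P')\le 2d_x$ and whose entries are the $x$-coefficients of $P$ and of $P'$, all of total $y$-degree $\le d$; hence $\deg Q\le 2d_xd$, and $Q$ has order $r+1$. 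Throughout, the only genuinely delicate point is the non-vanishing of the resultant.
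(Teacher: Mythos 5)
Your proposal is correct and follows essentially the same route as the paper: differentiate $P$ once so the new equation is linear in $y^{(r+1)}$, take $\Res_x(P,P')$, rule out a vanishing resultant via a common irreducible factor argument using primitivity in $y^{(r)}$ and separability (you merely invoke the two hypotheses in the opposite order), and bound the degree by the size $2d_x$ of the Sylvester matrix with entries of degree at most $d$. The only additions beyond the paper's proof are harmless bookkeeping (the Bezout identity for the specialization at $f$ and the explicit $d_x=0$ case).
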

\begin{proof}
  If $P'$ is independent from $x$ then there is nothing to prove. Otherwise
  we claim that $\Res_x(P,P')$ is not $0$. Indeed, if such was the
  case then $P$ and $P'$ would have a common irreducible factor
  $q(x,y,y',\dots,y^{(r+1)})$. But since $P$ is only a polynomial in
  $C[x,y,\dots,y^{(r)}]$ that must also be the case of $q$. Furthermore,
  since $P$ is primitive, $q$ is not independent from $y^{(n)}$. Since
  \[
  P'=y^{(r+1)}\partial_{y^{(r)}}P+R(x,y,\dots,y^{(n)}),
  \]
  for some $R\in K[x,y,\dots,y^{(r)}]$, $q$ is a factor of $P$ and $\partial_{y^{(r)}}P$
  which can not be the case since $P$ is supposed separable in $y^{(r)}$.
  Then $(x,f,\dots,f^{(r+1)})$ is a root of both $P$ and $P'$ so
  $f,f',\dots,f^{(r+1)}$ must be a root of $\Res_x(P,P')$. The
  bound comes from the fact that $\Res_x(P,P')$ is the determinant
  of a square matrix of size $2d_x$ with coefficients of degree at most $d$
  in $C[y,y',\dots,y^{(r+1)}]$.
\end{proof}

\bibliographystyle{ACM-Reference-Format}
\balance
\bibliography{bibliographie}
\end{document}